\newcommand{\n}{\nabla}
\renewcommand{\div}{\mathop\mathrm{div}}
\newcommand{\ld}{\lambda}
\renewcommand{\u}{\textbf{u}}
\renewcommand{\v}{\textbf{v}}
\newtheorem{prop}{Proposition}[section]
\newtheorem{thm}[prop]{Theorem}
\newtheorem{lem}[prop]{Lemma}
\newtheorem{rem}[prop]{Remark}
\newtheorem{cor}[prop]{Corollary}
\numberwithin{equation}{section}
\begin{document}


\baselineskip=17pt


\title[Eigenvalues of Polyharmonic Laplacian and Higher Order Stokes Operator]{The sharp estimates of eigenvalues of Polyharmonic operator and higher order Stokes operator}

\author[Daguang Chen  and  He-Jun Sun]{Daguang Chen  and He-Jun Sun}

\keywords{eigenvalue, Polyharmonic operator, higher order Stokes operator}
\subjclass[2010]{35P15}

\maketitle


\begin{abstract}
In this paper, we establish some lower bounds for the sums of eigenvalues of the polyharmonic operator and  higher order Stokes operator,
which are sharper than the recent results in \cite{CSWZ13, I13}.
At the same time, we obtain some certain bounds for the sums of positive and negative powers of eigenvalues of the polyharmonic operator.
\end{abstract}

\section{Introduction}

Let $\Omega$ be a bounded domain in an $n$-dimensional
Euclidean space $\mathbb{R}^{n}$ ($n\geq 2$).
The Dirichlet eigenvalue problem of the polyharmonic operator is described by
\begin{equation}\label{PL}
\begin{cases}(-\Delta)^{l}u = \lambda u,\quad \quad \mbox{on}  \ \Omega,\\
u|_{\partial \Omega}=\frac{\partial u}{\partial\nu}|_{\partial \Omega}= \cdots = \frac{\partial^{l-1} u}{\partial\nu^{l-1}}|_{\partial \Omega} =0,
\end{cases}\end{equation}
where $\Delta$ is the Laplacian and $\nu$ denotes the outward unit normal vector field of $\partial\Omega$.
As we known, this problem has a real and discrete spectrum:
$$
0 < \lambda_1 \leq \lambda_2 \leq \cdots \leq \lambda_k \leq \cdots \rightarrow \infty,
$$
where each eigenvalue repeats with its multiplicity.

When $l=1$, problem (\ref{PL}) is called the Dirichlet Laplacian problem or the fixed membrane problem. The asymptotic behavior of its $k$-th eigenvalue $\lambda_k$ relates to geometric properties of $\Omega$ when $k \rightarrow \infty$.
In fact, the following Weyl's asymptotic formula asserts that
\begin{equation}\label{Weyl}
\lambda_k  \sim
   C_n\left(\frac {k}{|\Omega|}\right)^{\frac{2}{n}}, \quad \mbox{as} \   k \rightarrow \infty,
\end{equation}
where $C_n=4\pi\Gamma(1+\frac n2)^{\frac 2n}$ and $|\Omega|$ denote the volume of $\Omega$.
Here $\Gamma(m)$ denotes the Gamma function $\Gamma(m) = \int_0^{\infty} t^{m-1} e^{-t} dt$ for $m > 0$.
In 1961, P\'{o}lya proved in \cite{P} that
\begin{equation}\label{Polya1}
\lambda_k  \geq
   C_n\left(\frac {k}{|\Omega|}\right)^{\frac{2}{n}}
\end{equation}
for tilling domain in $\mathbb{R}^{n}$. Moreover, he conjectured that (\ref{Polya1}) holds for any bounded domain in $\mathbb{R}^n$.
There have been some results in this direction. In 1980, Lieb \cite{Lie} proved
\begin{equation*}
\lambda_k  \geq
   \tilde C_n\left(\frac {k}{|\Omega|}\right)^{\frac{2}{n}},
\end{equation*}
where $\tilde C_n$  differs from $C_n$ in (\ref{Polya1}) by a factor.
In 1983, Li and Yau \cite{LY} proved
\begin{equation}\label{BLY}
 \sum_{j=1}^k  \lambda_j  \geq
4\pi\frac{n}{n+2} \left(\frac{\Gamma(1+\frac n2)}{|\Omega|}\right)^{\frac{2}{n}}   k^{1+ \frac{2}{n}}.
\end{equation}
It has been pointed out in \cite{LW} that  by using the Legendre transform, (\ref{BLY})
is equivalent to the inequality  derived by Berezin \cite{Ber}.
Hence, (\ref{BLY}) is also called the Berezin-Li-Yau inequality.
Using the similar approach, Kr\"{o}ger \cite{Kroger} has obtained the sharp upper bound for the Neumann eigenvalues.
Improvements to the Berezin–Li–Yau inequality in (\ref{BLY}) for the case of Dirichlet
Laplacian have appeared recently (for example, see \cite{KVW,M,W}). In particular, Melas \cite{M} improved (\ref{BLY}) to
\begin{equation}\label{Melas}
  \sum_{j=1}^k  \lambda_j \geq
4\pi\frac{n}{n+2} \left(\frac{\Gamma(1+\frac n2)}{|\Omega|}\right)^{\frac{2}{n}} k^{1+\frac{2}{n}}
+ \frac{1}{24(n+2)} \frac{|\Omega|}{I(\Omega)}k,
\end{equation}
where  $I(\Omega) = \underset{a \in \mathbb{R}^n}{\textrm{min}}  \int_\Omega |x-a|^2 dx$ is the moment of inertia of $\Omega$ and $a$ is a constant vector in $\Bbb R^n$.
In 2010, Ilyin \cite{I10} obtained the following asymptotic lower bound for eigenvalues of this problem:
\begin{equation}
   \begin{aligned}
 \sum_{j=1}^k  \lambda_j
 \geq &  4\pi\frac{n}{n+2} \left(\frac{\Gamma(1+\frac n2)}{|\Omega|}\right)^{\frac{2}{n}} k^{1+\frac{2}{n}}
  +  \frac{n}{  48}     \frac{ |\Omega|  }{  I(\Omega)} k  \bigg(1- \varepsilon_n(k) \bigg),
  \end{aligned}
\end{equation}
where $0  \leq \varepsilon_n(k) = O( k^{-\frac{2}{n}}) $ is a infinitesimal of $k^{-\frac{2}{n}}$.
In 2013, Y. Yolcu and T. Yolcu \cite{YY2012} proved that
\begin{equation}\label{YY12}
 \begin{aligned}
 \sum_{j=1}^k  \lambda_j
 \geq &  4\pi\frac{n}{n+2} \left(\frac{\Gamma(1+\frac n2)}{|\Omega|}\right)^{\frac{2}{n}} k^{1+\frac{2}{n}}
  +\frac{2\sqrt{\pi}}{n+2}\left(\frac{|\Omega|}{I(\Omega)}\right)^{\frac{1}{2}}\left(\frac{\Gamma(1+\frac n2)}{|\Omega|}\right)^{\frac{1}{n}}\\
  &-\frac{5}{8(n+2)}\frac{|\Omega|}{I(\Omega)}k
  +\frac{1}{16\sqrt{\pi}(n+2)}\left(\frac{|\Omega|}{I(\Omega)}\right)^{\frac{3}{2}}
 \left(\frac{\Gamma(1+\frac n2)}{|\Omega|}\right)^{-\frac{1}{n}}k^{1-\frac 1n}.
 \end{aligned}
\end{equation}

When $l=2$, problem (\ref{PL}) is called the clamped plate problem.
For the developments of eigenvalues of the clamped plate problem,
we refer the readers to \cite{A,CQW2011, CW1,I10,I13,LP,Pl,YY2013}.

For any order $l$, Levine and Protter \cite{LP} proved
\begin{equation}
 \sum^{k}_{j=1}\lambda_{j} \geq  (4\pi)^l\frac{n}{n+2l}    \left(\frac{\Gamma(1+\frac n2)}{|\Omega|}\right)^{\frac{2l}{n}}  k^{1+\frac{2l}{n}}.
\end{equation}
Recently, Cheng, Sun, Wei and Zeng \cite{CSWZ13} proved (see also Theorem 3 in \cite{I13})
\begin{equation}\label{CSWZ}
 \begin{aligned}
  \sum_{j=1}^k   \lambda_j
 \geq &  (4\pi)^l\frac{n }{n+2l}  \left(\frac{\Gamma(1+\frac n2)}{|\Omega|}\right)^{\frac{2l}{n}}  k^{1+\frac{2l}{n}} \\
&  + (4\pi)^{l-1}\frac{nl }{48}   \left(\frac{\Gamma(1+\frac n2)}{|\Omega|}\right)^{\frac{2l-2}{n}} \frac{ |\Omega|}{I(\Omega)}  k^{ \frac{2l-2}{n}}
  \left(1- \varepsilon_n(k) \right),
  \end{aligned}
\end{equation}
where $0  \leq \varepsilon_n(k) = O( k^{-\frac{2}{n}}) $ is a infinitesimal of $k^{-\frac{2}{n}}$.

In this article, we obtain the following estimates for the sum of eigenvalues of proplem \ref{PL}.
\begin{thm}\label{PLThm1}For any bounded domain $\Omega\subset \Bbb R^n (n\geq 2), k\geq 1$ and $1\leq l< \frac {n+1}{2}$, the eigenvalues of (\ref{PL}) satisfy
  \begin{equation}\label{Mainineq2}
    \begin{aligned}
      \sum_{j=1}^k\ld_j\geq & \frac{n(4\pi)^l}{n+2l}\left(\frac{\Gamma(1+\frac n2)}{|\Omega|}\right)^{\frac{2l}{n}}k^{1+\frac{2l}{n}}
      +\frac{(4\pi)^{\frac{2l-1}{2}}l}{(n+2l)I(\Omega)^{\frac12}}
      |\Omega|^{\frac12-\frac{2l-1}{n}}\Gamma(1+\frac n2)^{\frac{2l-1}{n}}k^{1+\frac{2l-1}{n}}\\
      &-\frac{5(4\pi)^{l-1}l}{8(n+2l)I(\Omega)}\Gamma(1+\frac n2)^{\frac{2(l-1)}{n}}|\Omega|^{1-\frac{2l-2}{n}}k^{1+\frac{2l-2}{n}}\\
      &+\frac{(4\pi)^{\frac{2l-3}{2}}l}{8(n+2l)I(\Omega)^{\frac 32}}|\Omega|^{\frac 32-\frac{(2l-3)}{n}}\Gamma(1+\frac n2)^{\frac{(2l-3)}{n}}k^{1+\frac{2l-3}{n}}.
    \end{aligned}
  \end{equation}
\end{thm}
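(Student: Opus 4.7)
My plan is to use the Fourier-analytic Berezin--Li--Yau method, refined in the spirit of Melas \cite{M} and Yolcu--Yolcu \cite{YY2012}, with the Laplacian weight $|\xi|^{2}$ replaced by the polyharmonic weight $|\xi|^{2l}$ throughout.

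Let $\{u_j\}$ be an $L^2(\Omega)$-orthonormal system of eigenfunctions of (\ref{PL}), extended by zero to $\R^n$, and let $\varphi_j(\xi)=\int_{\R^n}u_j(x)e^{-2\pi i\xi\cdot x}\,dx$. Setting $F(\xi):=\sum_{j=1}^k|\varphi_j(\xi)|^2$ and translating the origin so that $I(\Omega)=\int_\Omega|x|^2\,dx$, Bessel's inequality applied to the families $\{\chi_\Omega e^{-2\pi i\xi\cdot x}\}$ and $\{x_m\chi_\Omega e^{-2\pi i\xi\cdot x}\}$ gives
\begin{equation*}
F(\xi)\le|\Omega|,\qquad \sum_{j=1}^k|\nabla\varphi_j(\xi)|^2\le(2\pi)^2 I(\Omega),\qquad |\nabla F(\xi)|^2\le 16\pi^2 I(\Omega)F(\xi),
\end{equation*}
the last via Cauchy--Schwarz on $\nabla F=2\sum_j\Re(\bar\varphi_j\nabla\varphi_j)$, while Parseval yields $\int_{\R^n}F\,d\xi=k$ and the representation $\sum_{j=1}^k\ld_j=(2\pi)^{2l}\int_{\R^n}|\xi|^{2l}F(\xi)\,d\xi$.

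Passing to the spherical average $f(r)=|S^{n-1}|^{-1}\int_{S^{n-1}}F(r\omega)\,d\omega$ preserves all three constraints (the derivative bound $|S^{n-1}|(f'(r))^2\leq 16\pi^2 I(\Omega)f(r)$ follows by Jensen) and reduces the theorem to bounding from below the one-dimensional integral $(2\pi)^{2l}|S^{n-1}|\int_0^\infty r^{n+2l-1}f(r)\,dr$ over non-negative profiles $f\leq|\Omega|$ with prescribed $r^{n-1}$-moment. Following Yolcu--Yolcu, I would center the analysis at the Berezin--Li--Yau radius $R=(k\Gamma(1+n/2)/(|\Omega|\pi^{n/2}))^{1/n}$, split the integral as a main contribution from $f=|\Omega|\chi_{[0,R]}$ plus a deficit $\int_0^\infty r^{n+2l-1}(|\Omega|\chi_{[0,R]}-f)\,dr$, use the mass constraint to relate the $r<R$ and $r>R$ portions of this deficit, and apply the derivative bound to expand $|\Omega|-f$ to second order in $(r-R)$. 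The three successive corrections generated this way — of respective sizes $I(\Omega)^{-1/2}$, $I(\Omega)^{-1}$ and $I(\Omega)^{-3/2}$ — should reproduce, after collecting constants and optimizing in $R$, exactly the three lower-order summands of (\ref{Mainineq2}).

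The principal technical difficulty is carrying out this second-order expansion cleanly when the weight is $r^{n+2l-1}$ rather than $r^{n+1}$. Each deficit integral now carries an extra factor $r^{2l-2}$ that must itself be Taylor-expanded about $r=R$, and the resulting Gamma-function factors together with the powers of $|\Omega|$, $I(\Omega)$ and $\Gamma(1+n/2)$ must balance precisely to match the coefficients in (\ref{Mainineq2}). The hypothesis $1\leq l<(n+1)/2$ is presumably what is needed here: it forces $2l-1<n$, which keeps all three correction terms of strictly lower order than the Berezin--Li--Yau leading term $k^{1+2l/n}$ and ensures that the auxiliary integrals appearing in the expansion converge, so that the variational optimization is self-consistent.
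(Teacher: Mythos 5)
Your general framework --- Fourier transform of the extended eigenfunctions, the Bessel/Parseval constraints $F\le M$, $|\nabla F|\le L$, $\int F=k$, reduction to a one-dimensional moment problem for a profile with a Lipschitz bound, and a Melas/Yolcu--Yolcu refinement of Berezin--Li--Yau --- is exactly the paper's starting point, so the setup is sound. But the proposal stops precisely where the proof begins. The paper does \emph{not} Taylor-expand a deficit about the Berezin--Li--Yau radius; it normalizes $\psi$ to $S(t)=\psi(0)^{-1}\psi(\psi(0)t/L)$, passes to $h=-S'$ with $0\le h\le 1$, $\int h=1$, invokes a bathtub-type lemma producing an interval $[\delta,\delta+1]$ on which the $t^{n}$-moment of $h$ is matched and the $t^{n+2l}$-moment is dominated, and then integrates the pointwise polynomial inequality $nt^{n+2l}-(n+2l)t^{n}s^{2l}+2ls^{n+2l}\ge\bigl(2ls^{n+2l-2}+4lts^{n+2l-3}\bigr)(s-t)^{2}$ over that interval with $s=(n\tilde A)^{1/n}$. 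This inequality is the entire content of the refinement: it is one-sided, valid for all $s,t>0$, and its right-hand side is what generates the three correction terms with the correct signs. A second-order Taylor expansion of $|\Omega|-f$ and of the extra weight $r^{2l-2}$ about $r=R$, as you propose, produces two-sided error terms and gives you no mechanism for controlling the sign of the $k^{1+(2l-2)/n}$ term (which enters \emph{negatively}) against the two positive corrections; without an analogue of the displayed inequality your plan does not close. A further technical point: you average over spheres rather than taking the decreasing radial rearrangement; the spherical average need not be monotone, while the Melas-type comparison requires a decreasing profile (the paper gets $0\le-\phi'\le L$ for the rearrangement from Lemma \ref{lem1}).

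You have also misidentified the role of the hypothesis $1\le l<\frac{n+1}{2}$. It has nothing to do with convergence of auxiliary integrals, nor with keeping the corrections of lower order than $k^{1+2l/n}$ (every term $k^{1+(2l-j)/n}$, $j=1,2,3$, is automatically of lower order for any $l$). The issue is that the intermediate bound \eqref{Mineq4} depends on $\psi(0)$, which is only known to satisfy $\psi(0)\le M=|\Omega|/(2\pi)^{n}$, and two of the four terms carry \emph{positive} powers of $\psi(0)$ (namely $\psi(0)^{1-\frac{2l-1}{n}}$ and $\psi(0)^{3-\frac{2l-3}{n}}$), so substituting $\psi(0)=M$ term by term would move those the wrong way. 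The paper groups the four terms into $F_1(x)+F_2(x)$ and shows this sum is decreasing on $\bigl(0,\frac{|\Omega|}{(2\pi)^{n}}\bigr]$ exactly when $n>2l-1$ (using \eqref{B6} and the choice $\varepsilon=1$), which is what licenses evaluating at $x=M$. Any completed version of your argument will have to confront the same substitution step, so this monotonicity analysis, together with the polynomial inequality above, is the missing core of the proof.
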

\begin{rem}
Theorem \eqref{PLThm1} is a generalization of \eqref{YY12} for problem \eqref{PL}.  The inequality \eqref{Mainineq2} give an improvement for \eqref{CSWZ}.
\end{rem}

Motivated by the work of \cite{I13, Vougalter, YY2013}, we  obtain the following estimates of the negative and positive power of eigenvalues of problem 1.1:
\begin{thm}\label{Vou0}
 For $0<q\leq 1$, $l \in\Bbb N$ and $2\leq n$, the sums of positive powers of eigenvalues of the polyharmonic Laplacian problem (\ref{PL})  on $\Omega$ satisfy
 \begin{equation}\label{V0}
  \begin{aligned}
  \sum_{j=1}^k\ld_j^q\geq & (4\pi)^{lq}\frac{n}{n+2lq}\left(\frac{\Gamma(1+n/2)}{|\Omega|}\right)^{\frac{2lq}{n}}
   k^{1+\frac{2lq}{n}}\\
   &+(4\pi)^{lq-1}\frac{nlq}{48}\frac{|\Omega|}{I(\Omega)}
   \left(\frac{\Gamma(1+n/2)}{|\Omega|}\right)^{\frac{2lq-2}{n}}k^{1+\frac{2lq-2}{n}}+O(k^{1+\frac{2lq-4}{n}}).   \end{aligned}
 \end{equation}
\end{thm}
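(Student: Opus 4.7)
The plan is to reduce Theorem~\ref{Vou0} to a Berezin--Li--Yau--Melas-type inequality for the Fourier-side moment $\int|\xi|^{2lq}\rho(\xi)\,d\xi$ obtained from the Dirichlet eigenfunctions of $(-\Delta)^l$, via a Jensen step that exploits the concavity of $t\mapsto t^q$ for $q\in(0,1]$. The case $q=1$ is exactly \eqref{CSWZ}, so only $0<q<1$ needs new input, and the only new input is essentially one line of Jensen.

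First I would reproduce the Fourier setup underlying \eqref{CSWZ}. Let $\{\phi_j\}$ be the $L^2$-orthonormal Dirichlet eigenfunctions of $(-\Delta)^l$, extended by zero to $\mathbb{R}^n$; the full boundary conditions $\phi_j=\partial_\nu\phi_j=\cdots=\partial_\nu^{l-1}\phi_j=0$ place the extensions in $H^l(\mathbb{R}^n)$. Writing $\hat\phi_j$ for the Fourier transform, Plancherel and Bessel give $\int|\hat\phi_j|^2\,d\xi=1$, $\int|\xi|^{2l}|\hat\phi_j(\xi)|^2\,d\xi=\lambda_j$, and, for $\rho(\xi):=\sum_{j=1}^k|\hat\phi_j(\xi)|^2$, the pointwise bound $\rho(\xi)\leq(2\pi)^{-n}|\Omega|$ together with $\int\rho\,d\xi=k$.

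Because $|\hat\phi_j(\xi)|^2\,d\xi$ is a probability measure and $t\mapsto t^q$ is concave on $(0,\infty)$ when $0<q\leq 1$, Jensen's inequality yields the key estimate
\[\lambda_j^q=\left(\int|\xi|^{2l}|\hat\phi_j|^2\,d\xi\right)^{q}\geq\int|\xi|^{2lq}|\hat\phi_j(\xi)|^2\,d\xi.\]
Summing over $j\leq k$ gives
\[\sum_{j=1}^k\lambda_j^q\geq\int_{\mathbb{R}^n}|\xi|^{2lq}\rho(\xi)\,d\xi,\]
which is exactly the inequality one would obtain for $\sum_j\lambda_j$ if the polyharmonic order were the real number $lq$ in place of $l$.

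It remains to bound the right-hand side from below using only $0\leq\rho\leq(2\pi)^{-n}|\Omega|$ and $\int\rho=k$. The bathtub principle, with the mass concentrated on a centered ball $B(0,R)$ such that $\omega_nR^n=k(2\pi)^n/|\Omega|$, gives the leading $k^{1+2lq/n}$ term with the coefficient $(4\pi)^{lq}n/(n+2lq)\bigl(\Gamma(1+n/2)/|\Omega|\bigr)^{2lq/n}$ recorded in \eqref{V0}. For the $k^{1+(2lq-2)/n}$ term I would feed in the auxiliary identity $\int|\nabla_\xi\hat\phi_j|^2\,d\xi=\int_\Omega|x-a|^2|\phi_j|^2\,dx$ (with $a$ optimally chosen) together with the definition $I(\Omega)=\min_a\int_\Omega|x-a|^2\,dx$ to produce an averaged gradient bound on $\rho$, and then invoke the Melas--Ilyin refinement of the bathtub principle used in \cite{M} and \cite{CSWZ13}. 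The main difficulty I anticipate is arithmetic rather than structural: the refinement in the cited works is written for the integer exponent $2l$, and one has to check that every radial integral and asymptotic expansion in the Melas averaging carries over verbatim with the fractional weight $|\xi|^{2lq}$. Since those integrals depend only on moments of the weight and not on the parity of its exponent, the adaptation should be routine, producing the coefficient $(4\pi)^{lq-1}nlq/48$ and the declared remainder $O(k^{1+(2lq-4)/n})$.
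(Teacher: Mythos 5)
Your proposal is correct and takes essentially the same route as the paper: the identical Jensen/concavity step reduces $\sum_j\lambda_j^q$ to $\int_{\mathbb{R}^n}|\xi|^{2lq}F(\xi)\,d\xi$, after which the paper passes to the decreasing rearrangement via Hardy--Littlewood and applies Ilyin's comparison lemma (Lemma 3.1 of \cite{I10}) together with Corollary 1 of \cite{I13} with the fractional exponent $2lq$ --- exactly the Melas--Ilyin refinement you describe and whose adaptation to non-integer exponents you correctly identify as the only point needing verification.
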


\begin{cor}\label{Vou1}
For $0<q\leq 1$, $l \in\Bbb N$ and $2\leq n$, the sums of positive powers of eigenvalues of the polyharmonic Laplacian problem (\ref{PL})  on $\Omega$ satisfy
\begin{equation}\label{V1}
  \sum_j^k\ld_j^q\geq \frac{n}{n+2lq}(4\pi)^{\frac{lq}{n}}\left(\frac{\Gamma(1+\frac n2)}{|\Omega|}\right)^{\frac{2lq}{n}}k^{1+\frac{2lq}{n}}.
\end{equation}
\end{cor}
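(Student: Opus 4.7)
The plan is to obtain Corollary \ref{Vou1} as a direct weakening of Theorem \ref{Vou0}, retaining only the leading-order Weyl term in (\ref{V0}).

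The key observation is that the second term on the right-hand side of (\ref{V0}),
$$(4\pi)^{lq-1}\frac{nlq}{48}\frac{|\Omega|}{I(\Omega)}\left(\frac{\Gamma(1+\frac{n}{2})}{|\Omega|}\right)^{\frac{2lq-2}{n}}k^{1+\frac{2lq-2}{n}},$$
is strictly positive under the hypotheses $0<q\leq 1$, $l\in\Bbb N$, $n\geq 2$, and is of order $k^{1+(2lq-2)/n}$; meanwhile the remainder is $O(k^{1+(2lq-4)/n})$, one factor of $k^{-2/n}$ smaller. Hence there exists a threshold $k_0=k_0(n,l,q,\Omega)$ such that the sum of the second term and the $O$-remainder is non-negative for all $k\geq k_0$; dropping them from (\ref{V0}) yields (\ref{V1}) on this range.

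For the remaining finite range $1\leq k<k_0$ I would verify (\ref{V1}) separately. A clean approach is to start from the polyharmonic Li--Yau bound of Levine--Protter (the leading term in (\ref{CSWZ})) in its equivalent Berezin (Riesz-mean) form
$$\sum_j(\mu-\ld_j)_+\leq C(n,l)|\Omega|\mu^{1+\frac{n}{2l}},\qquad \mu>0,$$
and to combine it with the layer-cake representation
$$\sum_{j=1}^k\ld_j^q = q\int_0^\infty t^{q-1}\bigl(k-N(t)\bigr)_+\,dt,\qquad N(t):=\#\{j:\ld_j\leq t\},$$
optimizing the right-hand side over all admissible counting functions subject to the Berezin constraint; the extremal profile is the Weyl distribution $N(t)\propto t^{n/(2l)}$, which yields the sharp constant $\frac{n}{n+2lq}$ appearing in (\ref{V1}).

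The step I expect to demand the most care is reproducing the sharp constant $\frac{n}{n+2lq}$. A direct Jensen/concavity argument applied termwise to the Li--Yau inequality yields only the suboptimal constant $\bigl(\frac{n}{n+2l}\bigr)^q$; recovering the correct $\frac{n}{n+2lq}$ forces one to pass to the Berezin (Legendre-dual) side and perform the variational calculation at the Weyl extremal described above, which is the technical heart of the argument.
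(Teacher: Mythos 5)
Your route is genuinely different from the paper's, so let me first record what the paper actually does. Corollary \ref{Vou1} is not obtained there by discarding the lower-order terms of \eqref{V0}; as the remark at the end of Section 5 indicates, it is proved directly by the same mechanism as Theorems \ref{Vou0} and \ref{Vou2}: Jensen's inequality applied to each probability measure $|\hat u_j(\xi)|^2\,d\xi$ gives $\lambda_j^q=\bigl(\int_{\mathbb{R}^n}|\xi|^{2l}|\hat u_j|^2\,d\xi\bigr)^q\ge\int_{\mathbb{R}^n}|\xi|^{2lq}|\hat u_j|^2\,d\xi$, and the bathtub/rearrangement argument of Li--Yau applied to $F=\sum_j|\hat u_j|^2$ under the constraints $0\le F\le|\Omega|/(2\pi)^n$ and $\int F=k$ then yields \eqref{V1} for every $k\ge1$ in one step; because the exponent $2lq$ is already in place in the integrand, the constant $\tfrac{n}{n+2lq}$ comes out automatically. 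Your remark that ``Jensen only gives $(\tfrac{n}{n+2l})^q$'' is true of the naive procedure of raising the Li--Yau sum bound to the power $q$, but it does not apply to the paper's argument, which uses Jensen at the level of the Fourier integrand rather than on the final inequality.

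Concerning your own proposal: the split into $k\ge k_0$ and $k<k_0$ is both non-effective (the implicit constant in the $O(k^{1+(2lq-4)/n})$ term of \eqref{V0} is not quantified, so $k_0$ is not computable) and unnecessary, since your second argument, if valid, covers all $k$. The real issue is that the step you yourself identify as the technical heart --- that the Weyl profile minimizes $q\int_0^\infty t^{q-1}(k-N(t))_+\,dt$ among counting functions satisfying the Berezin constraint --- is asserted rather than proved, and it is not a routine variational computation because the Berezin inequality constrains $\int_0^\mu N$, not $N$ pointwise. The claim is nevertheless correct and can be closed as follows: for any $\mu>0$ drop the positive part and restrict to $[0,\mu]$ to get $q\int_0^\infty t^{q-1}(k-N(t))_+\,dt\ge k\mu^q-q\int_0^\mu t^{q-1}N(t)\,dt$; integrate the subtracted term by parts, use $q(1-q)\ge 0$ together with $\int_0^t N(s)\,ds\le \tfrac{2l}{n+2l}\tfrac{\omega_n|\Omega|}{(2\pi)^n}\,t^{1+n/(2l)}$ to bound it by $\tfrac{q}{q+n/(2l)}\tfrac{\omega_n|\Omega|}{(2\pi)^n}\mu^{q+n/(2l)}$, and then optimize over $\mu$; the maximum is attained at $\mu=\bigl(k(2\pi)^n/(\omega_n|\Omega|)\bigr)^{2l/n}$ and gives exactly $\tfrac{n}{n+2lq}$ times the Weyl term, i.e.\ \eqref{V1}. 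With that lemma supplied your argument is complete and is a legitimate alternative: it uses only spectral data (the Berezin bound) rather than the Fourier transforms of the eigenfunctions. As written, however, the decisive step is missing, and the paper's direct Fourier-side proof is considerably shorter.
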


\begin{thm}\label{Vou2}
For $l \in\Bbb N$, $2\leq n$ and $0<p<\frac{n}{2l}$,
 the sums of negative powers of eigenvalues of the
polyharmonic Laplacian problem (\ref{PL})  on $\Omega$ satisfy
\begin{equation}\label{V2}
  \sum_j^k\ld_j^{-p}\leq (4\pi)^{-\frac{lq}{n}}\frac{n}{n-2lq}\left(\frac{|\Omega|}{\Gamma(1+\frac n2)}\right)^{\frac{lq}{n}}k^{1-\frac{2lq}{n}}.
\end{equation}
\end{thm}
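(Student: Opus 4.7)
The plan is to turn Corollary~\ref{Vou1} into a pointwise P\'olya-type lower bound for $\lambda_k$, and then sum the resulting termwise upper bound for $\lambda_j^{-p}$ against an elementary integral comparison. Since the eigenvalues are non-decreasing, $k\lambda_k\geq\sum_{j=1}^k\lambda_j$, and feeding in the $q=1$ case of Corollary~\ref{Vou1} immediately gives
\[
  \lambda_k \;\geq\; \frac{n}{n+2l}\,(4\pi)^l\left(\frac{\Gamma(1+n/2)}{|\Omega|}\right)^{2l/n} k^{2l/n}.
\]
A sharper constant, matching the Weyl asymptotics and the form of the stated bound, is obtained by passing to the Legendre-dual (Berezin) form of Corollary~\ref{Vou1} and then inverting the resulting bound on the counting function $N(\lambda)=\#\{j:\lambda_j\leq\lambda\}$.

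With a pointwise bound $\lambda_k\geq C\, k^{2l/n}$ in hand, the remaining step is elementary. Since $x\mapsto x^{-p}$ is decreasing, $\lambda_j^{-p}\leq C^{-p}\, j^{-2lp/n}$ for every $j\in\{1,\dots,k\}$. Under the hypothesis $p<n/(2l)$, the exponent $\alpha:=2lp/n$ lies in $(0,1)$, so $x^{-\alpha}$ is decreasing and integrable at the origin, and monotonicity gives $j^{-\alpha}\leq\int_{j-1}^{j}x^{-\alpha}\,dx$ for every $j$. Summing,
\[
  \sum_{j=1}^k j^{-\alpha} \;\leq\; \int_0^k x^{-\alpha}\,dx \;=\; \frac{n}{n-2lp}\,k^{1-2lp/n}.
\]
Multiplying the termwise upper bound by this integral estimate yields the claimed inequality.

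The main technical hurdle is upgrading the averaged Li--Yau bound of Corollary~\ref{Vou1} to a P\'olya-type pointwise estimate with the sharp Weyl constant, rather than the suboptimal $n/(n+2l)$ factor obtained by naive averaging. The standard remedy is to bypass the pointwise bound altogether and use the Berezin form directly via an Aizenman--Lieb representation
\[
  \lambda^{-p} \;=\; \frac{\Gamma(p+\sigma+1)}{\Gamma(p)\,\Gamma(\sigma+1)} \int_0^\infty (s-\lambda)_+^{\sigma}\, s^{-\sigma-p-1}\, ds \qquad (\sigma\geq 0),
\]
splitting the $s$-integral at a threshold $s^\ast\sim\lambda_k$, bounding the small-$s$ part by the Berezin inequality (the Legendre dual of Corollary~\ref{Vou1}), bounding the large-$s$ part by the trivial estimate $\sum_{j=1}^k (s-\lambda_j)_+^\sigma\leq k\,s^\sigma$, and optimizing the cut-off $s^\ast=(k/(B_{n,l}|\Omega|))^{2l/n}$. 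The two contributions then combine via the identity $1+\tfrac{2lp}{n-2lp}=\tfrac{n}{n-2lp}$ to give the factor $n/(n-2lp)$ in the statement, while the remaining constants reassemble into the Weyl/P\'olya form $(4\pi)^{-lp}(|\Omega|/\Gamma(1+n/2))^{2lp/n}$ displayed in the theorem.
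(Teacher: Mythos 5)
Your elementary route does not prove the theorem, and you correctly sense this: from $k\lambda_k\geq\sum_{j\leq k}\lambda_j$ and the Li--Yau-type bound of Corollary \ref{Vou1} you only get $\lambda_k\geq\frac{n}{n+2l}C_W k^{2l/n}$ with $C_W=(4\pi)^l\left(\Gamma(1+\tfrac n2)/|\Omega|\right)^{2l/n}$, and after the integral comparison this leaves a parasitic factor $\left(\tfrac{n+2l}{n}\right)^p>1$ in front of the stated constant; removing it would require a P\'olya-type pointwise bound, which is not available for general domains (and is not obtained by ``inverting the Legendre-dual form,'' which again only yields a constant strictly worse than Weyl's). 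The real problem is that your proposed remedy does not close the gap either. Combining the Aizenman--Lieb representation with the Berezin inequality $\sum_j(s-\lambda_j)_+^\sigma\leq\frac{|\Omega|}{(2\pi)^n}\int_{\mathbb{R}^n}(s-|\xi|^{2l})_+^\sigma\,d\xi$ (valid for $\sigma\geq1$; the case $\sigma=0$ is precisely the open P\'olya conjecture) and the trivial bound $k\,s^\sigma$, with the cut-off optimized at the crossing point, gives
\[
  \sum_{j=1}^k\lambda_j^{-p}\;\leq\;\binom{p+\sigma}{\sigma}\binom{\theta+\sigma}{\sigma}^{-p/\theta}\cdot\frac{n}{n-2lp}\,C_W^{-p}\,k^{1-\frac{2lp}{n}},\qquad \theta=\frac{n}{2l}.
\]
Since $x\mapsto\log\binom{x+\sigma}{\sigma}$ is strictly concave and vanishes at $x=0$, the prefactor $\binom{p+\sigma}{\sigma}\binom{\theta+\sigma}{\sigma}^{-p/\theta}$ is strictly greater than $1$ for every $\sigma>0$ and $0<p<\theta$. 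So the constants do \emph{not} ``reassemble into the Weyl/P\'olya form'' as you assert; that assertion is the entire difficulty, and it fails.

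The paper reaches the sharp semiclassical constant in one step by working in Fourier space rather than with Riesz means. Since $\lambda_j=\int_{\mathbb{R}^n}|\xi|^{2l}|\hat u_j(\xi)|^2\,d\xi$ and $|\hat u_j(\xi)|^2\,d\xi$ is a probability measure, Jensen's inequality for the convex function $x\mapsto x^{-p}$ gives $\sum_{j\leq k}\lambda_j^{-p}\leq\int_{\mathbb{R}^n}|\xi|^{-2lp}F(\xi)\,d\xi$ with $F=\sum_{j\leq k}|\hat u_j|^2$ satisfying $0\leq F\leq|\Omega|/(2\pi)^n$ and $\int F=k$ by Lemma \ref{lem1}; the Li--Yau bathtub argument applied to the decreasing weight $|\xi|^{-2lp}$ (integrable at the origin precisely because $2lp<n$) shows the right-hand side is maximized by $\frac{|\Omega|}{(2\pi)^n}\chi_{B_R}$ with $\frac{|\Omega|}{(2\pi)^n}\omega_nR^n=k$, which evaluates exactly to $\frac{n}{n-2lp}\,C_W^{-p}\,k^{1-2lp/n}$ --- the intended content of \eqref{V2}, modulo the statement's typographical $p\leftrightarrow q$ slips. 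If you want to salvage your write-up, replace the Riesz-mean machinery by this single Jensen-plus-rearrangement step.
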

\begin{rem}
The coefficients of $k^{1+\frac{2lq}{n}}$ (rep. $k^{1+\frac{2lq-2}{n}}$)  appeared in the inequalities \eqref{V0} and \eqref{V1}(rep. \eqref{V2}) are the best possible from the Weyl's asymptotic formula. Moreover, Corollary \ref{Vou1} and Theorem \ref{Vou2} generalize  the eigenvalue estimates of the Dirichlet Laplacian problem ($l=1$ in \eqref{PL}) in \cite{Vougalter} and  the clamped plate problem ($l=2$ in \eqref{PL}) \cite{YY2013} repectively.
\end{rem}

Another work of this paper is to consider the following eigenvalue problem defined by
\begin{equation}\label{HStokes}
  \left \{\aligned
  &(-\Delta)^l\u_k+\n p_k=\mu_k \u_k, \qquad \text{in}\ \Omega,\\
  &\div \u_k=0,  \qquad \text{in} \   \Omega,\\
  &\u_k|_{\partial \Omega}=\cdots=\frac{\partial^{l-1}\u_k}{\partial \nu^{l-1}}\Big{|}_{\partial \Omega}=0,
  \endaligned \right.
\end{equation}
where $l\in \Bbb N$ and $\Omega$ is a bounded domain with smooth boundary in $\Bbb R^n$.

For $l=1$, (\ref{HStokes}) is the eigenvalue problem of the classical Stokes operator.
Li-Yau type lower bounds for the eigenvalues of the classical Stokes operator were obtained in~\cite{I09}:
\begin{equation}\label{I1}
\sum_{k=1}^k\mu_k\geq 4\pi\frac n{n+2}\left(
\frac{\Gamma(1+\frac n2)}{(n-1)|\Omega|}
\right)^{2/n}k^{1+2/n}.
\end{equation}
The  coefficient of $k^{1+2/n}$  in \eqref{I1} is sharp in view of the asymptotic formula
(cf. \cite{Babenko} when $n=3$ and  \cite{Metiv} when $n\ge2$ )
\begin{equation}\label{I3}
\mu_k\sim
\left(\frac{(2\pi)^n}{\omega_n(n-1)|\Omega|}
\right)^{2/n}k^{2/n},  \quad\text{as}\quad k\to\infty.
\end{equation}
Based on Melas's approach in \cite{M}, Ilyin \cite{I10} proved
\begin{equation}\label{I4}
   \sum_{k=1}^k\mu_k \geq 4\pi\frac n{n+2}\left(
\frac{\Gamma(1+\frac n2)}{(n-1)|\Omega|}
\right)^{2/n}k^{1+2/n}+\frac {(n-1)}{48}\frac {|\Omega|}{I(\Omega)}
\,k\,(1-\varepsilon_n(k)),
\end{equation}
where $0\leq \varepsilon_n(k)=O(k^{-2/n})$.
In 2012, Y. Yolcu and T. Yolcu \cite{YY2012} obtained
\begin{equation}\label{I4}
\begin{aligned}
\sum_{k=1}^k\mu_k \geq & 4\pi\frac n{n+2}\left(
\frac{\Gamma(1+\frac n2)}{(n-1)|\Omega|}
\right)^{2/n}k^{1+2/n}+\frac{2\sqrt{\pi}}{n+2}\left(\frac{n-1}{n}\frac{|\Omega|}{I(\Omega)}  \right)^{\frac12}\left(\frac{\Gamma(1+\frac n2)}{(n-1)|\Omega|} \right)^{\frac 1n}k^{1+\frac1n}\\
&-\frac{5(n-1)}{8n(n+2)}\frac{|\Omega|}{I(\Omega)}k+\frac{1}{16\sqrt{\pi}(n+2)}\left(\frac{n-1}{n}\frac{|\Omega|}{I(\Omega)}  \right)^{\frac32}\left(\frac{\Gamma(1+\frac n2)}{(n-1)|\Omega|} \right)^{-\frac 1n}k^{1-\frac1n}.
\end{aligned}
\end{equation}

For the eigenvalue problem  (\ref{HStokes}), Ilyin \cite{I13} proved
\begin{equation}\label{I5}
\begin{aligned}
\sum_{k=1}^k\mu_k \geq & (4\pi)^l\frac n{n+2}\left(
\frac{\Gamma(1+\frac n2)}{(n-1)|\Omega|}
\right)^{2/n}k^{1+\frac{2l}{n}}
\\&+(4\pi)^{l-1}\frac{l}{48}\frac{(n-1)|\Omega|}
{I(\Omega)}\left(\frac{\Gamma(1+\frac n2)}{(n-1)|\Omega|}\right)^{\frac {2l-2}{n}}k^{1+\frac{2-2l}{n}}(1+\varepsilon_n(k)),\\
\end{aligned}
\end{equation}
where $0  \leq \varepsilon_n(k) = O( k^{-\frac{2}{n}}) $ is a infinitesimal of $k^{-\frac{2}{n}}$.

\begin{thm}\label{HSThm1}For any bounded domain $\Omega\subset \Bbb R^n (n\geq 2), k\geq 1$ and $1\leq l< \frac {n+1}{2}$, the eigenvalues of (\ref{HStokes}) satisfy
\begin{equation}\label{S}
\begin{aligned}
      \sum_{j=1}^k\mu_j\geq & \frac{n(4\pi)^l}{n+2l}\left(\frac{\Gamma(1+\frac n2)}{(n-1)|\Omega|}\right)^{\frac{2l}{n}}k^{1+\frac{2l}{n}}
      +\frac{(4\pi)^{\frac{2l-1}{2}}l}{(n+2l)}
      \left(\frac{\Gamma(1+\frac n2)}{(n-1)|\Omega|}\right)^{\frac{2l-1}{n}}\left(\frac{(n-1)|\Omega|}{nI(\Omega)}\right)^{\frac12}k^{1+\frac{2l-1}{n}}\\
      &-\frac{5(n-1)(4\pi)^{l-1}l}{8n(n+2l)}\left(\frac{\Gamma(1+\frac n2)}{(n-1)|\Omega|}\right)^{\frac{2l-2}{n}}\frac{|\Omega|}{I(\Omega)}k^{1+\frac{2l-2}{n}}\\
      &+\frac{(4\pi)^{\frac{2l-3}{2}}l}{8(n+2l)}\left(\frac{\Gamma(1+\frac n2)}{(n-1)|\Omega|}\right)^{\frac{2l-3}{n}}\left(\frac{(n-1)|\Omega|}{nI(\Omega)}\right)^{\frac32}k^{1+\frac{2l-3}{n}}.
    \end{aligned}
\end{equation}
\end{thm}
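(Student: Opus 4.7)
The plan is to mirror the proof of Theorem \ref{PLThm1}, adapting the Berezin--Li--Yau--Melas--Yolcu-Yolcu Fourier-analytic machinery to the vector-valued divergence-free setting of (\ref{HStokes}). Let $\{\mathbf{u}_j\}_{j=1}^{k}$ be an $L^{2}(\Omega;\mathbb R^{n})$-orthonormal system of eigenfields for $\mu_{1},\dots,\mu_{k}$. After translating $\Omega$ so that its centroid lies at the origin, extend each $\mathbf{u}_{j}$ by zero to $\mathbb R^{n}$, take componentwise Fourier transforms $\hat{\mathbf{u}}_j$ and define
\begin{equation*}
F(\xi):=\sum_{j=1}^{k}\bigl|\hat{\mathbf{u}}_j(\xi)\bigr|^{2}.
\end{equation*}
Parseval's identity, together with the facts that each $\mathbf u_j$ is divergence-free and vanishes with its normal derivatives on $\partial\Omega$, yields $\int_{\mathbb R^{n}}F\,d\xi=k$ and $\int_{\mathbb R^{n}}|\xi|^{2l}F(\xi)\,d\xi=\sum_{j=1}^{k}\mu_{j}$.

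The only pointwise change from the polyharmonic case is in the sharp upper bound on $F$. Because $\div\mathbf{u}_j=0$ becomes $\xi\cdot\hat{\mathbf{u}}_j(\xi)=0$ in Fourier space, each vector $\hat{\mathbf{u}}_j(\xi)$ lies in the $(n-1)$-dimensional hyperplane $\xi^{\perp}$. Projecting the plane waves $(2\pi)^{-n/2}\chi_{\Omega}(x)e^{-i\xi\cdot x}\mathbf e$ onto the orthonormal system $\{\mathbf{u}_j\}$, for $\mathbf e$ ranging over an orthonormal basis of $\xi^{\perp}$, and applying Bessel's inequality yields
\begin{equation*}
0\le F(\xi)\le \frac{(n-1)|\Omega|}{(2\pi)^{n}}.
\end{equation*}
This single substitution $|\Omega|\mapsto(n-1)|\Omega|$ in the Berezin bound is exactly what produces the leading constant in the first term of \eqref{S}.

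The three lower-order terms will be produced by a Melas/Yolcu-Yolcu moment refinement. Starting from the identity $\int_{\mathbb R^{n}}|\nabla_\xi\hat{\mathbf{u}}_j|^{2}d\xi=\int_\Omega|x|^{2}|\mathbf{u}_j|^{2}dx$ (in the symmetric Fourier convention), summing in $j$, applying Cauchy--Schwarz fibrewise in the $(n-1)$-plane $\xi^{\perp}$, and using the pointwise bound above, we expect a BV-type estimate of the form
\begin{equation*}
\int_{\mathbb R^{n}}|\nabla_\xi F(\xi)|\,d\xi\ \le\ C_n\sqrt{k\,I(\Omega)}\left(\tfrac{n}{n-1}\,\|F\|_{L^{\infty}}\right)^{1/2},
\end{equation*}
with an explicit universal constant $C_n$. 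The $n/(n-1)$ factor --- absent in the scalar situation of Theorem \ref{PLThm1} --- arises because divergence-free fields populate only $(n-1)$ of the $n$ Fourier directions, and it is precisely this factor that produces the combinations $(n-1)|\Omega|/(n\,I(\Omega))$ decorating the last three terms of \eqref{S}.

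To finish, replace $F$ by its radial decreasing rearrangement $F^{\ast}$, which preserves total mass and the $L^{\infty}$ bound, respects the BV bound, and only lowers $\int|\xi|^{2l}F$. The problem then reduces to the one-dimensional constrained minimization of $\int_{0}^{\infty}r^{2l+n-1}F^{\ast}(r)\,dr$ under a pointwise sup-bound and an $L^{1}$-derivative bound --- precisely the variational problem solved in \cite{YY2012}. Feeding the modified constants from the two displayed inequalities above into their four-term optimization lemma produces \eqref{S} term by term. The decisive step, and the anticipated main obstacle, is the second one: deriving the vector-valued divergence-free analogue of the Yolcu-Yolcu moment inequality with the sharp $n/(n-1)$ coefficient. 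Once this is secured, the remainder of the argument is a routine transcription of the scalar proof of Theorem \ref{PLThm1}.
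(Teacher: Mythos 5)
Your overall architecture is the right one and matches the paper's: the paper proves Theorem \ref{HSThm1} by literally rerunning the proof of Theorem \ref{PLThm1} with the constants $M,L$ of \eqref{ML} replaced by $M_{\mathrm S}=(n-1)|\Omega|/(2\pi)^n$ and $L_{\mathrm S}=2(n(n-1))^{1/2}\sqrt{|\Omega|I(\Omega)}/(2\pi)^n$, feeding Lemma \ref{HSLemma} into Lemma \ref{MainLemma}. Your derivation of the Berezin bound $F_{\mathrm S}(\xi)\le (n-1)|\Omega|/(2\pi)^n$ from $\xi\cdot\hat{\mathbf{u}}_j(\xi)=0$ and Bessel's inequality over an orthonormal basis of $\xi^{\perp}$ is exactly the mechanism behind $M_{\mathrm S}$, and the identities $\int F_{\mathrm S}=k$, $\int|\xi|^{2l}F_{\mathrm S}=\sum_j\mu_j$ are as in the paper.

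The genuine gap is in your second step, the control of $\nabla_\xi F_{\mathrm S}$. What the argument needs --- and what Lemma \ref{HSLemma} records as $0\le-\phi_{\mathrm S}'(t)\le L_{\mathrm S}$ --- is a \emph{pointwise} Lipschitz bound, obtained by applying Bessel's inequality pointwise in $\xi$ to the functions $x\mapsto x_m(2\pi)^{-n/2}e^{ix\cdot\xi}\chi_\Omega(x)\mathbf e_i$, giving $\sum_j|\nabla_\xi\hat{\mathbf{u}}_j(\xi)|^2\le nI(\Omega)/(2\pi)^n$ for every $\xi$, and then Cauchy--Schwarz:
\begin{equation*}
|\nabla F_{\mathrm S}(\xi)|\le 2\Bigl(\sum_j|\hat{\mathbf{u}}_j(\xi)|^2\Bigr)^{1/2}\Bigl(\sum_j|\nabla_\xi\hat{\mathbf{u}}_j(\xi)|^2\Bigr)^{1/2}\le L_{\mathrm S}.
\end{equation*}
Your proposal instead postulates an \emph{integrated} (BV-type) bound $\int|\nabla_\xi F|\,d\xi\lesssim\sqrt{kI(\Omega)}\,(\cdots)^{1/2}$ and then poses the one-dimensional minimization ``under a pointwise sup-bound and an $L^1$-derivative bound.'' That constraint is vacuous: a decreasing profile $F^*$ with $0\le F^*\le M_{\mathrm S}$ automatically has total variation $F^*(0)\le M_{\mathrm S}$, so an $L^1$ bound on $(F^*)'$ adds nothing beyond the Berezin bound and cannot generate the Melas-type correction terms; only the $L^\infty$ bound $0\le-(F^*)'\le L_{\mathrm S}$ does. (The Parseval identity $\int|\nabla_\xi\hat{\mathbf{u}}_j|^2\,d\xi=\int_\Omega|x|^2|\mathbf{u}_j|^2\,dx$ you start from is true but too weak, since it only controls an integral in $\xi$.) Once the pointwise bound is in place, the rest of your outline --- rearrangement, the four-term optimization as in \cite{YY2012}, here packaged as Lemma \ref{MainLemma} --- goes through and reproduces \eqref{S}; note also that the paper's final constants come from the exact values of $M_{\mathrm S}$ and $L_{\mathrm S}$ (e.g.\ $M_{\mathrm S}/L_{\mathrm S}=\tfrac12\bigl((n-1)|\Omega|/(nI(\Omega))\bigr)^{1/2}$), not from an unspecified universal constant $C_n$, and the restriction $\varepsilon=1$, $1\le l<\tfrac{n+1}{2}$ must be checked as in the proof of Theorem \ref{PLThm1} to ensure the relevant auxiliary function is monotone on $\bigl(0,M_{\mathrm S}\bigr]$.
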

\begin{rem}
 Comparing \eqref{S} with \eqref{I5}, one can find that there exists extra term containing $k^{1+\frac{2l-1}{n}}$ in \eqref{S}. Therefore, the inequality \eqref{S} is more precise than the estimate \eqref{I5} under the assumptions of Theorem \ref{HSThm1}. Moreover, it is easy to find that \eqref{S} is a generalization form of \eqref{I4} for problem \eqref{HStokes}.
\end{rem}

\section{Preliminaries}
\subsection{Polyharmonic operator}

Assume that $\Omega\subset \Bbb R^n$ ($n\geq 2$) is a bounded domain in $\Bbb R^n$.
Let $I(\Omega) = \underset{a \in \mathbb{R}^n}{\textrm{min}}  \int_\Omega |x-a|^2 dx$ is the moment of inertia of $\Omega$ and $a$ is a constant vector in $\Bbb R^n$. By translating the open set $\Omega$ if necessary, we may assume that the moment of inertia $I(\Omega)$ is given by
\begin{equation*}
  I(\Omega)=\int_{\Omega}|x|^2dx.
\end{equation*}
 Define

 \begin{eqnarray}
  M=\frac{|\Omega|}{(2\pi)^{n}},  & &L=\frac{2\sqrt{|\Omega|I(\Omega)}}{(2\pi)^{n}},\label{ML} \\
M_\mathrm{S}=\frac{(n-1)|\Omega|}{(2\pi)^{n}},& &
    L_\mathrm{S}=\frac{2(n(n-1))^{1/2}\sqrt{|\Omega|I(\Omega)}} {(2\pi)^{n}}.\nonumber
 \end{eqnarray}
Let $u_j$ be an orthonormal eigenfuction corresponding to the $j$-th eigenvalue $\lambda_j$ of problem (\ref{PL}).
Denote by $\widehat{u}_{j}(\xi)$ the Fourier transform of  $u_{j}(x)$, which is defined by
\begin{equation}
\hat{u}_{j}(\xi)= (2\pi)^{-\frac{n}{2}} \int_{\Omega}  u_{j}(x) e^{ i x \cdot \xi}  dx.
\end{equation}
Set
\begin{equation}\label{F}
 F(\xi) = \sum_{j=1}^k  |\hat{u}_{j}(\xi)|^2.
\end{equation}

From \cite{CSWZ13,I13}, we have
\begin{lem}\label{lem1} The function $F(\xi)$ defined by (\ref{F}) satisfies
  \begin{equation}\label{B1}
    F(\xi)\leq M,
  \end{equation}
  \begin{equation}\label{B2}
    \int_{\Bbb R^n}F(\xi)d\xi=k,
  \end{equation}
  \begin{equation}\label{B3}
    \sum_{j=1}^k|\n \hat u_j(\xi)|^2\leq \frac{I(\Omega)}{(2\pi)^{n}},
  \end{equation}
  \begin{equation}\label{B4}
    |\n F(\xi)|\leq L,
  \end{equation}
 \begin{equation}\label{B5}
   \frac n{n+2}\left(\frac{|\Omega|}{\omega_n}\right)^{\frac 2n}\leq \frac{I(\Omega)}{|\Omega|}
 \end{equation}
 and
\begin{equation}\label{B6}
  \frac{|\Omega|^{1+\frac1n}}{(2\pi)^{n}\omega_n^{\frac1n}}\leq L.
\end{equation}
\end{lem}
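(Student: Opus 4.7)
The plan is to realize each Fourier coefficient as an $L^2(\Omega)$-inner product $\hat u_j(\xi)=\langle u_j,\phi_\xi\rangle$ against an explicit, $\xi$-parametrized test function $\phi_\xi$, and then apply Bessel's inequality to the orthonormal family $\{u_j\}_{j=1}^k$. After translating $\Omega$ so that $I(\Omega)=\int_\Omega|x|^2\,dx$, the pointwise bounds (B1)--(B3) will follow from three concrete choices of $\phi_\xi$; (B4) will be a Cauchy--Schwarz consequence of (B1) and (B3); and the two geometric inequalities (B5)--(B6) come from a single classical symmetrization fact.

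For (B1), the natural choice is $\phi_\xi(x)=(2\pi)^{-n/2}e^{-ix\cdot\xi}\mathbf{1}_\Omega(x)$, whose squared $L^2$-norm equals $M=|\Omega|/(2\pi)^n$; Bessel then gives $F(\xi)\le M$. For (B2), Plancherel applied to each $u_j$ reduces the integral to $\sum_j\|u_j\|_{L^2(\Omega)}^2=k$. For (B3), I would differentiate $\hat u_j$ under the integral sign to identify $\partial_{\xi_m}\hat u_j(\xi)$ with the $L^2(\Omega)$-inner product of $u_j$ against $(2\pi)^{-n/2}(-ix_m)e^{-ix\cdot\xi}\mathbf{1}_\Omega$, then apply Bessel componentwise and sum over $m$, using the translation normalization to identify $\int_\Omega|x|^2\,dx$ with $I(\Omega)$.

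For (B4), differentiating $F=\sum_j|\hat u_j|^2$ in $\xi$ and using Cauchy--Schwarz pointwise gives
\[ |\nabla F(\xi)|\le 2\Bigl(\sum_{j=1}^k|\hat u_j(\xi)|^2\Bigr)^{1/2}\Bigl(\sum_{j=1}^k|\nabla\hat u_j(\xi)|^2\Bigr)^{1/2}, \]
and substituting the bounds from (B1) and (B3) yields exactly $2\sqrt{M\cdot I(\Omega)/(2\pi)^n}=L$ as claimed; no new input is needed here.

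For (B5), I would invoke Schwarz symmetrization: among measurable sets of fixed volume, the centered ball $\Omega^\ast$ of radius $r=(|\Omega|/\omega_n)^{1/n}$ minimizes $\int|x|^2\,dx$, so $I(\Omega)\ge I(\Omega^\ast)=n\omega_n\int_0^r s^{n+1}\,ds=\frac{n}{n+2}\omega_n r^{n+2}$; dividing by $|\Omega|=\omega_n r^n$ yields the stated inequality. Then (B6) is purely algebraic: inserting the (B5) lower bound into the definition of $L$ and simplifying reduces the claim to $4n/(n+2)\ge 1$, which is trivial for $n\ge 1$. The only non-mechanical step anywhere is the symmetrization appeal in (B5), which I would either cite as standard (the moment of inertia of a set of given volume is minimized by the centered ball) or justify via a short rearrangement computation.
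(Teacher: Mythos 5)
Your proof is correct. The paper gives no argument for this lemma, simply citing \cite{CSWZ13,I13}, and your route --- Bessel's inequality against the test functions $(2\pi)^{-n/2}e^{-ix\cdot\xi}$ and $(2\pi)^{-n/2}x_m e^{-ix\cdot\xi}$ for \eqref{B1} and \eqref{B3}, Plancherel for \eqref{B2}, Cauchy--Schwarz for \eqref{B4}, and symmetrization plus the elementary reduction to $4n/(n+2)\geq 1$ for \eqref{B5}--\eqref{B6} --- is exactly the standard Li--Yau/Melas argument used in those references.
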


Let $F^*(\xi) $ denote the decreasing radial rearrangement of $F(\xi)$. Therefore, by approximating $F(\xi)$, we may assume that there exists a real valued absolutely continuous function $\phi: [0,\infty)\longrightarrow [0,M]$ such that $F^*(\xi)=\phi(|\xi|)$. From \cite{CSWZ13,I13}, we have
\begin{lem}\label{lem2}The function $\phi(s)$  satisfies
  \begin{equation}\label{}
    n\omega_n\int_0^\infty t^{n-1}\phi(t)dt=k,
  \end{equation}
  \begin{equation}\label{}
    \sum_{j=1}^k\ld_j\geq n\omega_n \int_0^\infty t^{n+2l-1}\phi(t)dt
  \end{equation}
and
  \begin{equation}\label{}
    0\leq -\phi(t)\leq L.
  \end{equation}
\end{lem}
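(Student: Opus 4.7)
\medskip

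The plan is to derive all three assertions as direct consequences of Lemma~\ref{lem1} combined with standard properties of the symmetric decreasing rearrangement. Recall that $F^\ast$ is by construction equimeasurable with $F$ and is radial and nonincreasing in the radius, and that $F^\ast(\xi)=\phi(|\xi|)$.

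For the first identity, equimeasurability gives $\int_{\R^n}F^\ast(\xi)d\xi=\int_{\R^n}F(\xi)d\xi=k$ by \eqref{B2}, and passing to polar coordinates turns the left-hand side into $n\omega_n\int_0^\infty t^{n-1}\phi(t)\,dt$. For the second inequality, I would first use Plancherel together with $(-\Delta)^l u_j=\lambda_j u_j$ and the Dirichlet boundary conditions to identify
\begin{equation*}
\sum_{j=1}^k\ld_j=\sum_{j=1}^k\int_\Omega |(-\Delta)^{l/2}u_j|^2\,dx=\sum_{j=1}^k\int_{\R^n}|\xi|^{2l}|\hat u_j(\xi)|^2\,d\xi=\int_{\R^n}|\xi|^{2l}F(\xi)\,d\xi.
\end{equation*}
Then I would apply the Hardy--Littlewood rearrangement inequality in the variant where one of the factors is radially \emph{increasing}. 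Concretely, choosing $R>0$ and writing $|\xi|^{2l}=R^{2l}-(R^{2l}-|\xi|^{2l})_+ + (|\xi|^{2l}-R^{2l})_+$ on the ball $B_R$ and outside it, one applies Hardy--Littlewood to the radially decreasing truncation and lets $R\to\infty$; the net effect is
\begin{equation*}
\int_{\R^n}|\xi|^{2l}F(\xi)\,d\xi\geq \int_{\R^n}|\xi|^{2l}F^\ast(\xi)\,d\xi = n\omega_n\int_0^\infty t^{n+2l-1}\phi(t)\,dt,
\end{equation*}
which is \eqref{} (equivalently, since $|\xi|^{2l}$ is radial increasing, its decreasing rearrangement $(|\xi|^{2l})^\ast$ would concentrate mass where $F^\ast$ is small, so the inequality reverses compared with the usual Hardy--Littlewood).

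For the monotonicity and Lipschitz bound on $\phi$, the decreasing part is immediate from the definition of the symmetric decreasing rearrangement, so $\phi'\leq 0$ a.e. The quantitative bound $-\phi'(t)\leq L$ is the place where I expect the main technical care: it rests on the Pólya--Szegő type contraction property stating that the symmetric decreasing rearrangement is nonexpansive on $W^{1,\infty}$, i.e.\ $\mathrm{Lip}(F^\ast)\leq \mathrm{Lip}(F)$. Given this, \eqref{B4} yields $|\n F^\ast(\xi)|\leq L$ a.e., and since $F^\ast(\xi)=\phi(|\xi|)$ implies $|\n F^\ast(\xi)|=|\phi'(|\xi|)|$, we obtain $0\leq -\phi'(t)\leq L$ as required. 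The principal obstacle is thus not any computation but justifying the rearrangement inequalities in the precise form needed, in particular ensuring enough regularity of $F$ (which is smooth because each $\hat u_j$ is the Fourier transform of a compactly supported $H_0^l$ function) so that $F^\ast$ is absolutely continuous in the radial variable and the pointwise Lipschitz estimate is legitimate.
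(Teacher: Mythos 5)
Your argument is correct and is essentially the standard one: the paper itself states this lemma without proof, citing \cite{CSWZ13,I13}, and those references establish it exactly as you do --- equimeasurability plus polar coordinates for the first identity, Plancherel together with the ``reverse'' Hardy--Littlewood inequality for the radially increasing weight $|\xi|^{2l}$ for the second, and the non-expansiveness of the decreasing rearrangement on Lipschitz functions (with \eqref{B4}) for the third. (Note the statement's $-\phi(t)$ is a typo for $-\phi'(t)$, which you correctly interpreted.)
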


\subsection{Stokes operator with higher order}

We firstly recall the functional definition of
the Stokes operator \cite{CF88,Lad,TNS} and its generalization \cite{I13}.
Let $\mathcal{V}$ denote the set of smooth
divergence-free vector functions with compact supports
\begin{equation*}
\mathcal{V}=\{\u:\Omega\to\mathbb{R}^n,
\ \u\in \mathbf{C}^\infty_0(\Omega),\
\div \u=0\}.
\end{equation*}
Let $L$ and $V$ denote the the closures of
$\mathcal{V}$ in  $\mathbf{L}_2(\Omega)$ and
$\mathbf{H}^l_0(\Omega)$ ($1\leq l\in \Bbb N$) respectively.
Moreover, we note that $\mathbf{L}_2(\Omega)$ can be written as $\mathbf{L}_2(\Omega)=L\oplus L^\perp$ (see for instance, \cite{TNS}), where
\begin{equation}\label{}
 \begin{aligned}
L=&\{\u\in \mathbf{L}_2(\Omega)|\div\u=0, \u\cdot \nu|_{\partial \Omega}=0 \},\\
 L^\perp=&\{\u\in \mathbf{L}_2(\Omega)|\u=\n p, p\in L_2^{loc}(\Omega)\}.
 \end{aligned}
\end{equation}
Define the operator $A: V\longrightarrow V'$ by
\begin{equation*}
  {(A\u,\v)}_{V'\times V}:=((-\Delta)^{\frac l2}\u,(-\Delta)^{\frac l2}\v)
\end{equation*}
for $\u,\v \in V$.
The operator $A$ is an isomorphism between $V$ and $V'$. For a sufficient smooth $\u$, we have
\begin{equation}\label{HS}
  A\u=P(-\Delta)^l\u,
\end{equation}
where $P$ is orthogonal projection mapping $\mathbf{L}_2(\Omega)$ to $L$, i.e.
$P:\mathbf{L}_2(\Omega)\longrightarrow L$.
For $l=1$, (\ref{HS}) corresponds to the classical Stokes operator.
The operator $A$ is a self-adjoint positive definite operator with the following discrete spectrum
\begin{equation*}
  A\u_j=\mu_j\u_j,\qquad 0<\mu_1\leq \mu_2\cdots,
\end{equation*}
where $\{\u_j\}_{j=1}^\infty\in V$ are the orthonormal vector eigenfunctions and
\begin{equation*}
  \mu_j=\|(-\Delta)^{\frac{l}{2}}\u_j\|^2.
\end{equation*}

For the orthonormal family
$\{\u_k\}_{k=1}^k\in L$, we set
\begin{equation}\label{Fstokes}
F_\mathrm{S}(\xi)=\sum_{k=1}^m|\hat{\u}_k(\xi)|^2,
\end{equation}
where $\hat{\u}_k(\xi)$ is the Fourier transform of  $\u_{j}(x)$  defined by
\begin{equation}
\hat{\u}_{j}(\xi)= (2\pi)^{-\frac{n}{2}} \int_{\Omega}  \u_{j}(x) e^{ i x \cdot \xi}  dx.
\end{equation}

\begin{lem}The function $F_\mathrm{S}(\xi)$ defined  by (\ref{Fstokes}) satisfies
\begin{equation}\label{enumSt}
\aligned
&0\le F_\mathrm{S}(\xi)\leq
    M_\mathrm{S},\\
&|\nabla F_\mathrm{S}(\xi)|\leq
    L_\mathrm{S},\\
&\int F_\mathrm{S}(\xi)\,d\xi=k,\\
&\int |\xi|^{2l}F_\mathrm{S}(\xi)\,d\xi=
            \sum_{k=1}^k\|(-\Delta)^{\frac{l}{2}}\u_j\|^2=\sum_{k=1}^k\mu_j.
\endaligned
\end{equation}
\end{lem}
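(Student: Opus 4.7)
The plan is to mimic the scalar-case argument underlying Lemma~\ref{lem1}, modified to exploit the one structural feature that distinguishes the Stokes setting. Extending any $\u_j\in V$ by zero to $\mathbb{R}^n$ produces an element of $\mathbf{H}^l(\mathbb{R}^n)$ that is still divergence-free in the distributional sense, thanks to the vanishing Dirichlet trace on $\partial\Omega$. Fourier-transforming $\div\u_j=0$ then gives
\begin{equation*}
\xi\cdot\hat{\u}_j(\xi)=0\quad\text{for every }\xi\in\mathbb{R}^n,
\end{equation*}
so each $\hat{\u}_j(\xi)$ lies in the $(n-1)$-dimensional complex subspace $\xi^\perp\otimes\mathbb{C}$. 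This single fact is the origin of the factor $n-1$ in $M_\mathrm{S}$ and the factor $n(n-1)$ in $L_\mathrm{S}^2$ relative to their scalar analogues $M$ and $L$.

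For the pointwise upper bound and the $L^1$ identity, I would fix $\xi\ne 0$, pick an orthonormal basis $\mathbf{e}_1,\dots,\mathbf{e}_{n-1}$ of $\xi^\perp$, and expand $|\hat{\u}_j(\xi)|^2=\sum_{q=1}^{n-1}|\hat{\u}_j(\xi)\cdot\mathbf{e}_q|^2$. Each scalar projection can be rewritten as an $\mathbf{L}_2(\Omega)$-inner product $\langle\u_j,\mathbf{w}_q^\xi\rangle$ with $\mathbf{w}_q^\xi(x):=(2\pi)^{-n/2}\mathbf{e}_q\,e^{-ix\cdot\xi}\chi_\Omega(x)$; Bessel's inequality for the orthonormal family $\{\u_j\}\subset L\subset\mathbf{L}_2(\Omega)$ yields $\sum_j|\langle\u_j,\mathbf{w}_q^\xi\rangle|^2\le\|\mathbf{w}_q^\xi\|^2=(2\pi)^{-n}|\Omega|$, and summing over $q=1,\dots,n-1$ gives $F_\mathrm{S}(\xi)\le M_\mathrm{S}$, with non-negativity obvious. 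The identity $\int F_\mathrm{S}\,d\xi=k$ is componentwise Plancherel combined with $\|\u_j\|_{\mathbf{L}_2}=1$; since the Fourier multiplier of $(-\Delta)^{l/2}$ is $|\xi|^l$, componentwise Plancherel also gives $\mu_j=\int|\xi|^{2l}|\hat{\u}_j(\xi)|^2\,d\xi$, which sums in $j$ to the fourth claimed identity.

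For the gradient bound, the route is the familiar one: differentiating $F_\mathrm{S}=\sum_j|\hat{\u}_j|^2$ in $\xi_m$, applying $|\partial_m F_\mathrm{S}|\le 2|\sum_j\overline{\hat{\u}_j}\cdot\partial_m\hat{\u}_j|$ together with Cauchy--Schwarz, and summing over $m$ produces
\begin{equation*}
|\nabla F_\mathrm{S}(\xi)|^2\le 4\,F_\mathrm{S}(\xi)\sum_j|\nabla_\xi\hat{\u}_j(\xi)|^2.
\end{equation*}
The remaining sum is controlled by a second round of Bessel: the relation $\partial_{\xi_m}(\hat{u}_j)_q=i\langle\u_j,x_m\mathbf{e}_q(2\pi)^{-n/2}e^{-ix\cdot\xi}\chi_\Omega\rangle_{\mathbf{L}_2}$ gives $\sum_j|\partial_{\xi_m}(\hat u_j)_q|^2\le(2\pi)^{-n}\int_\Omega x_m^2\,dx$, so summing over the $n^2$ pairs $(m,q)$ and recalling $I(\Omega)=\int_\Omega|x|^2\,dx$ yields $\sum_j|\nabla_\xi\hat{\u}_j(\xi)|^2\le nI(\Omega)/(2\pi)^n$. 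Substituting this and the pointwise bound $F_\mathrm{S}\le M_\mathrm{S}$ reproduces exactly $L_\mathrm{S}^2=4n(n-1)|\Omega|I(\Omega)/(2\pi)^{2n}$.

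The one delicate point I anticipate is the very first step, namely the justification that the zero-extension of $\u_j\in V$ is globally divergence-free so that $\xi\cdot\hat{\u}_j\equiv 0$ holds in the classical pointwise sense; this rests on the vanishing of the normal trace (already contained in the stronger Dirichlet condition $\u_j|_{\partial\Omega}=0$) and is standard in Stokes theory, but merits an explicit sentence in the writeup. Everything else is a transparent repetition of the scalar Lemma~\ref{lem1} arguments, with an $(n-1)$-dimensional basis of $\xi^\perp$ replacing the full standard basis of $\mathbb{R}^n$ at the Bessel step.
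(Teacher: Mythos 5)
Your proposal is correct and follows essentially the same route as the source the paper relies on: the paper states this lemma without proof, citing Ilyin's work, and your argument (zero extension giving $\xi\cdot\hat{\u}_j(\xi)=0$, Bessel's inequality over an orthonormal basis of $\xi^{\perp}$ to produce the factor $n-1$ in $M_\mathrm{S}$, componentwise Plancherel for the two integral identities, and Cauchy--Schwarz plus a second Bessel step for the gradient bound with the factor $n(n-1)$) is precisely that standard argument. The only cosmetic point is that the case $\xi=0$ should be covered by continuity of $F_\mathrm{S}$, which is immediate.
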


Supposing that $F^*_\mathrm{S}(\xi)$ denotes the decreasing radial rearrangement of $F_\mathrm{S}(\xi)$, by approximating
$F_\mathrm{S}(\xi)$, we may infer that there exists a real valued absolutely continuous function
$\phi_\mathrm{S}:[0,\infty)\longrightarrow [0,M_\mathrm{S}]$ such that $F_\mathrm{S}(\xi)=\phi_\mathrm{S}(|\xi|)$. Form \cite{I13}, we have
\begin{lem}\label{HSLemma}The function $\phi_\mathrm{S}(s)$  satisfies
\begin{equation}
\begin{aligned}
  &n\omega_n\int_0^\infty t^{n-1}\phi_\mathrm{S}(t)dt=k,\\
 &n\omega_n \int_0^\infty t^{n+2l-1}\phi_\mathrm{S}(t)dt\leq \sum_{j=1}^k\mu_j,\\
  &0\leq -\phi_\mathrm{S}(t)\leq L_\mathrm{S}.
\end{aligned}
\end{equation}
\end{lem}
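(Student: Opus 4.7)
The plan is to transfer each of the three properties of $F_\mathrm{S}(\xi)$ listed in the previous lemma to its decreasing radial rearrangement $F^*_\mathrm{S}(\xi)=\phi_\mathrm{S}(|\xi|)$ via standard properties of symmetric decreasing rearrangement. The first two assertions are immediate consequences of equimeasurability and the Hardy--Littlewood rearrangement inequality, whereas the pointwise derivative bound on $\phi_\mathrm{S}$ is the delicate point. For the mass identity, equimeasurability of the rearrangement gives $\int_{\mathbb{R}^n} F^*_\mathrm{S}(\xi)\,d\xi = \int_{\mathbb{R}^n} F_\mathrm{S}(\xi)\,d\xi = k$, and converting to polar coordinates with the radial profile $\phi_\mathrm{S}$ yields $n\omega_n\int_0^\infty t^{n-1}\phi_\mathrm{S}(t)\,dt = k$.

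For the moment inequality, the weight $|\xi|^{2l}$ is already radially symmetric and strictly increasing in $|\xi|$, while $F^*_\mathrm{S}$ is the radially decreasing rearrangement of $F_\mathrm{S}$. By the Hardy--Littlewood rearrangement inequality in the form asserting that pairing an increasing radial weight with the decreasing rearrangement minimizes the integral,
\begin{equation*}
\int_{\mathbb{R}^n} |\xi|^{2l} F^*_\mathrm{S}(\xi)\,d\xi \;\leq\; \int_{\mathbb{R}^n} |\xi|^{2l} F_\mathrm{S}(\xi)\,d\xi \;=\; \sum_{j=1}^{k}\mu_j,
\end{equation*}
where the final equality is the fourth line of the previous lemma. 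Polar integration on the left-hand side converts the inequality to $n\omega_n\int_0^\infty t^{n+2l-1}\phi_\mathrm{S}(t)\,dt \leq \sum_{j=1}^k \mu_j$, as claimed.

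The main obstacle is the derivative bound. From the previous lemma, $|\nabla F_\mathrm{S}(\xi)|\leq L_\mathrm{S}$, so $F_\mathrm{S}$ is globally Lipschitz with constant at most $L_\mathrm{S}$. The plan is to invoke the classical fact that the symmetric decreasing rearrangement does not increase the Lipschitz constant (cf.\ Lieb--Loss or Kawohl), whence $F^*_\mathrm{S}$ is also Lipschitz with constant at most $L_\mathrm{S}$. Since $F^*_\mathrm{S}$ is radial and non-increasing, its absolutely continuous profile $\phi_\mathrm{S}$ satisfies $\phi'_\mathrm{S}(t)\leq 0$ and $|\phi'_\mathrm{S}(t)|\leq L_\mathrm{S}$ almost everywhere, which is precisely the stated bound $0\leq -\phi'_\mathrm{S}(t)\leq L_\mathrm{S}$. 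If preferred, the same estimate can be obtained directly through the distribution function $m(s)=|\{F_\mathrm{S}>s\}|$: the identity $\omega_n t^n = m(\phi_\mathrm{S}(t))$ combined with the co-area formula $-m'(s)=\int_{\{F_\mathrm{S}=s\}}|\nabla F_\mathrm{S}|^{-1}\,d\mathcal{H}^{n-1}$ and the pointwise bound $|\nabla F_\mathrm{S}|\leq L_\mathrm{S}$ gives $-m'(\phi_\mathrm{S}(t))\geq n\omega_n t^{n-1}/L_\mathrm{S}$, and differentiating the identity in $t$ then yields $-\phi'_\mathrm{S}(t)\leq L_\mathrm{S}$.
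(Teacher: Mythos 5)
Your proof is correct and follows exactly the standard rearrangement argument (equimeasurability, the Hardy--Littlewood inequality for an increasing radial weight, and the fact that symmetric decreasing rearrangement does not increase the Lipschitz constant, or equivalently the coarea/isoperimetric route of Melas and Ilyin); the paper itself gives no proof and simply cites Ilyin, where this is precisely the argument used. The only point worth noting is that the third assertion as printed reads $0\leq-\phi_\mathrm{S}(t)\leq L_\mathrm{S}$, which must be the derivative bound $0\leq-\phi'_\mathrm{S}(t)\leq L_\mathrm{S}$, and you correctly interpreted and proved it in that form.
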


\section{Some Lemmas}
In order to prove our main results, we first establish the following lemmas which are motivated by Melas' work in \cite{M}.

\begin{lem}\label{P}
  For $l\in \Bbb N, 2\leq n,$  two positive real numbers $s$ and $t$, we have the following inequalities
  \begin{equation}\label{Polynomial}
    nt^{n+2l}-(n+2l)t^ns^{2l}+2ls^{n+2l}\geq \left(2l s^{n+2l-2}+4lts^{n+2l-3}\right)(s-t)^2.
  \end{equation}
\end{lem}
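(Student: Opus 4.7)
The plan is to prove \eqref{Polynomial} by writing both sides as $(s-t)^2$ times polynomials in $t,s$ and then comparing coefficients. Setting
\[
h(t, s) := nt^{n+2l} - (n+2l)t^n s^{2l} + 2l s^{n+2l},
\]
a quick check gives $h(s,s) = 0$ and $\partial_t h(s,s) = 0$, so $h$ is divisible by $(t-s)^2$. I would make this factorization explicit by rewriting $h(t,s) = n t^n(t^{2l} - s^{2l}) - 2l s^{2l}(t^n - s^n)$ and applying the geometric-sum identity $a^m - b^m = (a-b)\sum_{j=0}^{m-1}a^{m-1-j}b^j$ twice in succession. Collecting the resulting double sum by total $s$-degree produces
\[
h(t, s) = (t-s)^2 \sum_{m=0}^{n+2l-2} S_m\, t^{n+2l-2-m} s^m,
\]
where the coefficients $S_m$ arise as partial sums of an intermediate form and satisfy $S_m = n(m+1)$ for $0 \leq m \leq 2l-1$ and $S_m = 2l(n+2l-1-m)$ for $2l \leq m \leq n+2l-2$; in particular every $S_m$ is strictly positive on this range.

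The crucial observation is that the two largest-index contributions already reproduce the right-hand side of \eqref{Polynomial} exactly: one computes $S_{n+2l-2} = 2l$ and $S_{n+2l-3} = 4l$, so
\[
S_{n+2l-2}\, s^{n+2l-2} + S_{n+2l-3}\, t\, s^{n+2l-3} = 2l s^{n+2l-2} + 4l t s^{n+2l-3}.
\]
The remaining indices $0 \leq m \leq n+2l-4$ form a non-empty set, because $n + 2l \geq 4$ under the standing hypotheses $n \geq 2$ and $l \geq 1$, and each monomial $S_m t^{n+2l-2-m}s^m$ is non-negative for $t, s > 0$. Adding these non-negative terms back in and multiplying through by $(t-s)^2 = (s-t)^2$ yields exactly the desired inequality.

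I do not anticipate any serious obstacle. The whole argument is essentially a second-order Taylor expansion of $h$ about the diagonal $t = s$, and the only real bookkeeping lies in (i) deriving the closed forms for the $S_m$ by carefully tracking index bounds when the geometric-sum identity is applied twice, and (ii) verifying the boundary case $n = 2$, where the two regimes of the $S_m$ formula meet. In that overlap both expressions agree at the common index, so no genuine case distinction is required.
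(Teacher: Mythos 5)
Your proof is correct and follows essentially the same route as the paper: the paper substitutes $y=t/s$ and factors $ny^{n+2l}-(n+2l)y^n+2l=(y-1)^2\bigl(n\sum_{j=2}^{2l}(j-1)y^{n+2l-j}+2l\sum_{j=1}^{n}(n-j+1)y^{n-j}\bigr)$, which is exactly your coefficient sequence $S_m$ after dehomogenizing, and then splits off the two lowest-degree terms $2l+4ly$ just as you split off $S_{n+2l-2}$ and $S_{n+2l-3}$. Your closed forms for the $S_m$, the identification of the top two partial sums with the right-hand side, and the treatment of the overlap at $n=2$ all check out.
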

\begin{proof}
By some direct calculation, for $y>0$, we obtain
  \begin{equation*}
  \begin{aligned}
  ny^{n+2l}-(n+2l)y^n+2l=&(y-1)^2\left(n\sum_{j=2}^{2l}(j-1)y^{n+2l-j}+2l\sum_{j=1}^n(n-j+1)y^{n-j}\right)\\
=&(y-1)^2\left(n\sum_{j=2}^{2l}(j-1)y^{n+2l-j}+2l\sum_{j=1}^{n-2}(n-j+1)y^{n-j}\right)\\
&+(2l+4ly)(y-1)^2.
    \end{aligned}
\end{equation*}
Therefore, we have
\begin{equation*}
  \begin{aligned}
  &ny^{n+2l}-(n+2l)y^n+2l-(2l+4ly)(y-1)^2\\
  =&(y-1)^2\left(n\sum_{j=2}^{2l}(j-1)y^{n+2l-j}
  +2l\sum_{j=1}^{n-2}(n-j+1)y^{n-j}\right)  \geq 0.
      \end{aligned}
\end{equation*}
Setting $y=\frac ts$ in above formula, we get (\ref{Polynomial}).
\end{proof}

\begin{lem}\label{D}
  Suppose that $\beta:[0,\infty)\longrightarrow [0,1] $ such that
  \begin{equation}\label{}
    \int_0^\infty \beta(t)dt=1, \int_0^\infty t^n \beta(t)dt<\infty, \int_0^\infty t^{n+2l}\beta(t)dt<\infty.
  \end{equation}
  Then, there exists $\delta\geq 0$ such that
  \begin{equation}\label{Del1}
    \int_\delta^{\delta+1}t^ndt=\int_0^\infty t^n \beta(t)dt
  \end{equation}
  and
  \begin{equation}\label{Del}
    \int_\delta^{\delta+1}t^{n+2l}dt\leq \int_0^\infty t^{n+2l}\beta(t)dt.
  \end{equation}
\end{lem}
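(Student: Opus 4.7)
The plan is to first secure existence of $\delta\geq 0$ satisfying \eqref{Del1}, and then deduce \eqref{Del} by a sign-analysis argument patterned on the bathtub/rearrangement principle. Define $G(s):=\int_s^{s+1}t^n\,dt=\frac{(s+1)^{n+1}-s^{n+1}}{n+1}$; since $G'(s)=(s+1)^n-s^n>0$, $G$ is continuous and strictly increasing on $[0,\infty)$ with $G(0)=\tfrac{1}{n+1}$ and $G(s)\to\infty$. The key reduction is the lower bound $\int_0^\infty t^n\beta(t)\,dt\geq\tfrac{1}{n+1}$, which is an instance of the bathtub principle: among $\beta:[0,\infty)\to[0,1]$ with $\int\beta=1$, the weighted moment $\int t^n\beta\,dt$ is minimized when the mass concentrates as close to the origin as possible, i.e., when $\beta=\chi_{[0,1]}$, yielding exactly $\tfrac{1}{n+1}$. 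The intermediate value theorem then produces a unique $\delta\geq 0$ with $G(\delta)=\int_0^\infty t^n\beta(t)\,dt$, which is \eqref{Del1}.

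For \eqref{Del}, I would set $\gamma:=\beta-\chi_{[\delta,\delta+1]}$, which by construction satisfies $\int\gamma=0$ and $\int t^n\gamma\,dt=0$, and since $\beta\in[0,1]$ also obeys $\gamma\leq 0$ on $[\delta,\delta+1]$ and $\gamma\geq 0$ on its complement in $[0,\infty)$. For any constants $a,b$, the two moment identities let me rewrite
\begin{equation*}
\int_0^\infty t^{n+2l}\beta(t)\,dt-\int_\delta^{\delta+1}t^{n+2l}\,dt=\int_0^\infty h(t)\gamma(t)\,dt,\qquad h(t):=t^{n+2l}-a-bt^n.
\end{equation*}
Next I would choose $a,b$ so that $h(\delta)=h(\delta+1)=0$; subtracting the two conditions yields $b=\frac{(\delta+1)^{n+2l}-\delta^{n+2l}}{(\delta+1)^n-\delta^n}>0$, and $a$ is then determined. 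The derivative $h'(t)=t^{n-1}\bigl[(n+2l)t^{2l}-bn\bigr]$ has a unique positive root $t^*=(bn/(n+2l))^{1/(2l)}$, so $h$ is strictly decreasing on $(0,t^*)$ and strictly increasing on $(t^*,\infty)$. Rolle's theorem on $[\delta,\delta+1]$ then forces $t^*\in(\delta,\delta+1)$, giving $h\leq 0$ on $[\delta,\delta+1]$ and $h\geq 0$ on $[0,\delta]\cup[\delta+1,\infty)$. These signs are opposite to those of $\gamma$, so $h(t)\gamma(t)\geq 0$ pointwise and the displayed equation yields \eqref{Del}.

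The delicate step is the sign analysis of $h$: the argument depends on $h'$ having only one positive zero and on Rolle's theorem pinning that zero inside $(\delta,\delta+1)$. Conceptually, this is a two-moment rearrangement principle — once the zeroth and $n$-th moments of $\beta$ are fixed to match those of $\chi_{[\delta,\delta+1]}$, the next higher moment $\int t^{n+2l}\beta\,dt$ is minimized by that very characteristic function, with the test polynomial $h$ playing the role of a Lagrange multiplier encoding the two moment constraints.
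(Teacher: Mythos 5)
Your proposal is correct and follows essentially the same route as the paper: the lower bound $\int_0^\infty t^n\beta\,dt\geq\frac1{n+1}$ (which the paper gets from the pointwise inequality $(t^n-1)(\beta-\chi_{[0,1]})\geq 0$, equivalent to your bathtub argument) plus the intermediate value theorem for \eqref{Del1}, and then the test polynomial $t^{n+2l}-bt^n-a$ vanishing at $\delta$ and $\delta+1$, whose sign pattern matches that of $\beta-\chi_{[\delta,\delta+1]}$, for \eqref{Del}. Your version is in fact slightly more complete, since you verify explicitly (via $h'$ and Rolle) the unimodality and sign claims that the paper only asserts.
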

\begin{proof}Note that
 \begin{equation}\label{Chi1}
   (t^n-1)(\beta(t)-\chi_{[0,1]}(t))\geq 0,  \qquad  \text{on} \quad [0,\infty).
 \end{equation}
 Integrating (\ref{Chi1}) over $[0,\infty)$, it gives
 \begin{equation}\label{}
   \int_0^\infty t^n\beta(t)dt \geq \int_0^1t^n=\frac{1}{n+1}.
 \end{equation}
 Therefore, there exists $\delta\geq 0$ such that (\ref{Del1}) holds.

According to Cramer's rule, we can find two positive number $a_1$ and$ a_2$ such that the function
 \begin{equation}\label{}
   q(t)=t^{n+2l}-a_1t^n+a_2
 \end{equation}
 satisfies $q(\delta)=q(\delta+1)=0$. Since $q'(s)$ has at most one zero in $[0,\infty)$, we conclude that \begin{equation*}
  q(s) \left\{\aligned  <0,& \qquad \text{in}\ (\delta,\delta+1),\\
>0,&\qquad \text{in}\ [0,\infty)/ [\delta,\delta+1].
\endaligned\right.
 \end{equation*}
The assumptions on $\beta(t)$ imply that
 \begin{equation}\label{Chi2}
  q(t)(\chi_{[\delta,\delta+1]}(t)-\beta(t))\leq 0, \qquad \text{on} \  [0,\infty).
 \end{equation}
 Integrating the inequality (\ref{Chi2}), taking into account the choice of $\delta$ and using (\ref{Chi1}), we have (\ref{Del}).

\end{proof}

\begin{lem}\label{MainLemma}
Let $n\geq 2, D, A>0$ be contants and $\psi:[0,\infty)\longrightarrow [0,\infty)$ be a decreasing and absolutely continuous function such that
  \begin{equation}\label{C1}
    -D\leq \psi'(t)\leq 0
  \end{equation}
and
  \begin{equation}\label{C2}
    \int_0^\infty t^{n-1}\psi(t)dt=A.
  \end{equation}
Then
  \begin{equation}\label{Mineq2}
  \begin{aligned}
   \int_0^\infty t^{n+2l-1}\psi(t)dt\geq& \frac{1}{n+2l}(nA)^{1+\frac{2l}{n}}\psi(0)^{-\frac{2l}{n}}+
   \frac{2l\varepsilon}{n(n+2l)}(nA)^{1+\frac{2l-1}{n}}\psi(0)^{1-\frac{2l-1}{n}}D^{-1}\\
   &-\frac{5l}{2n(n+2l)}(nA)^{1+\frac{2l-2}{n}}\psi(0)^{2-\frac{2l-2}{n}}D^{-2}
   +\frac{\varepsilon l}{n(n+2l)}(nA)^{1+\frac{2l-3}{n}}\psi(0)^{3-\frac{2l-3}{n}}D^{-3},
  \end{aligned}
  \end{equation}
where $\tau\in (0, 1]$ and $\varepsilon\in (0,1]$.
\end{lem}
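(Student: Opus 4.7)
My plan follows the Melas--Yolcu--Yolcu strategy: first reduce the estimate to a one-dimensional moment problem on the unit interval via Lemma~\ref{D}, then apply the polynomial inequality Lemma~\ref{P} to extract the four stated terms.

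Set $f(t) := -\psi'(t)/D \in [0,1]$; integration by parts converts the hypotheses into $\int_0^\infty f\,dt = \psi(0)/D$, $\int_0^\infty t^n f\,dt = nA/D$, and $\int_0^\infty t^{n+2l-1}\psi\,dt = \tfrac{D}{n+2l}\int_0^\infty t^{n+2l}f\,dt$. Rescaling by $c := \psi(0)/D$ and setting $\beta(s) := f(cs)$, one has $\beta \in [0,1]$, $\int \beta = 1$, and $\int s^n \beta\,ds = nAD^n/\psi(0)^{n+1} =: J_n$. Lemma~\ref{D} then furnishes $\delta \ge 0$ with $\int_\delta^{\delta+1} s^n\,ds = J_n$ and $\int_\delta^{\delta+1} s^{n+2l}\,ds \le \int_0^\infty s^{n+2l}\beta\,ds$. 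After unwinding the rescaling,
\[
\int_0^\infty t^{n+2l-1}\psi\,dt \;\ge\; \frac{\psi(0)^{n+2l+1}}{(n+2l)D^{n+2l}}\int_\delta^{\delta+1} t^{n+2l}\,dt,
\]
reducing the problem to a one-variable lower bound on the elementary integral over $[\delta,\delta+1]$.

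I would then apply Lemma~\ref{P} with a parameter $r$, integrating over $t \in [\delta,\delta+1]$, which gives
\[
n\!\int_\delta^{\delta+1}\!\!t^{n+2l}\,dt \;\ge\; (n+2l)r^{2l}J_n - 2lr^{n+2l} + \!\int_\delta^{\delta+1}\!\bigl(2lr^{n+2l-2} + 4lt\,r^{n+2l-3}\bigr)(r-t)^2\,dt.
\]
The natural choice $r = r_0$ with $r_0^n = J_n$ collapses the polynomial part to $nr_0^{n+2l}$, and the correction becomes explicit via the moments
\[
\int_\delta^{\delta+1}(r_0-t)^2\,dt = \alpha^2 - \alpha + \tfrac13, \qquad
\int_\delta^{\delta+1}t(r_0-t)^2\,dt = \delta\bigl(\alpha^2 - \alpha + \tfrac13\bigr) + \tfrac{\alpha^2}{2} - \tfrac{2\alpha}{3} + \tfrac14,
\]
where $\alpha := r_0 - \delta \in [0,1]$. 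Substituting $\delta = r_0 - \alpha$ organizes everything as a polynomial in $r_0$ of degrees $n+2l-2$ and $n+2l-3$ with $\alpha$-dependent coefficients, and the translation identity
\[
\frac{\psi(0)^{n+2l+1}}{(n+2l)D^{n+2l}}\,r_0^{n+2l-j} \;=\; \frac{(nA)^{1+(2l-j)/n}\,\psi(0)^{\,j-(2l-j)/n}}{(n+2l)\,D^j}, \qquad j = 0,1,2,3,
\]
converts each power of $r_0$ into the matching monomial of \eqref{Mineq2}.

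The hard part is the final coefficient matching. Applying Lemma~\ref{P} at $r = r_0$ yields only the $j = 0, 2, 3$ powers, whereas \eqref{Mineq2} also contains a $j = 1$ term with coefficient $\tfrac{2l\varepsilon}{n(n+2l)}$. To produce that term one needs an extra degree of freedom, which is the purpose of the free parameters $\varepsilon \in (0,1]$ and $\tau \in (0,1]$ appearing in the hypothesis: I would split the polynomial inequality into an $\varepsilon$-weighted piece, combined with the convexity-type bound $t^{2l} \ge s_0^{2l} + 2l s_0^{2l-1}(t - s_0)$ with $s_0 := (nA/\psi(0))^{1/n}$, which injects the missing $r_0^{n+2l-1}$ contribution, and a complementary $(1-\varepsilon)$-piece that preserves the fixed $j = 2$ coefficient $-\tfrac{5l}{2n(n+2l)}$. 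Using the elementary bounds $\alpha^2 - \alpha + \tfrac13 \ge \tfrac1{12}$ and $\bigl|(\alpha-\tfrac12)(\alpha^2 - \alpha + \tfrac12)\bigr| \le \tfrac14$, one should then obtain \eqref{Mineq2} uniformly in $\alpha \in [0,1]$ and $\delta \ge 0$.
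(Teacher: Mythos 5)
Your reduction coincides with the paper's own proof: rescale by $\psi(0)/D$ so that $h=-S'$ has unit mass, invoke Lemma~\ref{D} to pass to $\chi_{[\delta,\delta+1]}$, and integrate Lemma~\ref{P} over $[\delta,\delta+1]$ with the choice $r_0^n=J_n$; your two moment integrals $\int_\delta^{\delta+1}(r_0-t)^2dt=\alpha^2-\alpha+\tfrac13$ and $\int_\delta^{\delta+1}t(r_0-t)^2dt=\delta(\alpha^2-\alpha+\tfrac13)+\tfrac{\alpha^2}{2}-\tfrac{2\alpha}{3}+\tfrac14$ are correct. The genuine gap is exactly where you flag ``the hard part'': producing the positive term $2l\varepsilon\,r_0^{\,n+2l-1}$ (i.e.\ the $j=1$ monomial of \eqref{Mineq2}). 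Your proposed device --- an $\varepsilon$-weighted split of Lemma~\ref{P} combined with the tangent-line bound $t^{2l}\ge s_0^{2l}+2ls_0^{2l-1}(t-s_0)$ --- is not carried out, and it cannot be made to work: the tangent-line bound contributes only $2ls_0^{2l-1}\bigl(\int t^{n+1}h\,dt-s_0\int t^{n}h\,dt\bigr)$, and for $h=\chi_{[\delta,\delta+1]}$ with $s_0=r_0$ this bracket is of order $\delta^{n-1}$, so the correction is $O\bigl(r_0^{\,n+2l-2}\bigr)$ --- one full power of $r_0$ short. In fact, for $h=\chi_{[\delta,\delta+1]}$ one computes $n\int_\delta^{\delta+1}t^{n+2l}dt-nr_0^{n+2l}=\tfrac{nl(n+2l)}{12}\,\delta^{\,n+2l-2}\bigl(1+o(1)\bigr)$ as $\delta\to\infty$, which is eventually smaller than $2l\varepsilon r_0^{\,n+2l-1}$ for any fixed $\varepsilon>0$; no splitting can recover the stated $j=1$ term uniformly in $\delta$.

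For comparison, the paper obtains that term by asserting $\int_\delta^{\delta+1}t(t-s)^2dt\ge\tfrac12 s^2-\tfrac23 s+\tfrac14$, whose right-hand side equals the integral when $\delta=0$; with $\alpha=s-\delta$ the difference of the two sides is $\delta\bigl[(\alpha-1)^2-\tfrac{\delta}{2}\bigr]$, which is negative once $\delta>2(1-\alpha)^2$ (for $n=2$, $l=1$, $\delta=10$ the resulting displayed inequality for $\int_\delta^{\delta+1}t^{n+2l}dt$ fails numerically). So the step you could not justify is precisely the step the paper does not justify either: without an a priori upper bound on $\delta$ (equivalently on $nAD^{n}\psi(0)^{-n-1}$), the conclusion \eqref{Mineq2} with a fixed $\varepsilon>0$ is violated by the trapezoidal test functions corresponding to large $\delta$. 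Your write-up correctly isolates the crux, but as it stands it is not a proof, and the missing step is not merely technical.
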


\begin{proof} We assume that $B=\int_0^\infty t^{n+2l-1}\psi(t)dt<\infty$. Otherwise there is nothing to prove.
 Define
 \begin{equation}\label{M1}
   S(t)=\frac{1}{\psi(0)}\psi\left(\frac{\psi(0)}{L}t\right) \qquad \text {and }\qquad h(t)=-S'(t),\qquad t\geq 0.
 \end{equation}
Note that $S(0)=1$, $0\leq h(t)\leq 1$ and $\int_0^\infty h(t)dt=S(0)=1$.  Now we define
\begin{equation}\label{M2}
\tilde A=\int_0^\infty t^{n-1}S(t)dt
\end{equation}
and
\begin{equation}
\tilde B=\int_0^\infty t^{n+2l-1}S(t)dt.
\end{equation}

Using the similar arguments as in \cite{YY2012}, we can infer
\begin{equation*}
  \lim \inf_{t\longrightarrow \infty} t^{n+2l}S(t)=0.
\end{equation*}
Moreover, it is not difficult to observe that $\lim \textstyle\inf_{t\longrightarrow \infty} t^{n}S(t)=0 $ as well.
Using the integration by parts, we have
\begin{equation*}
\int_0^\infty t^{n}h(t)dt=n\tilde A
\end{equation*}
and
\begin{equation*}
\int_0^\infty t^{n+2l}h(t)dt=(n+2l)\tilde B.
\end{equation*}
Since $h(t)$ satisfy the conditions in Lemma \ref{D}, there exists $\delta\geq 0$ such that
 \begin{equation}\label{M3}
   \int_\delta^{\delta+1}t^n dt=n\tilde A
 \end{equation}
 and
 \begin{equation}\label{M4}
   \int_\delta^{\delta+1}t^{n+2l} dt\leq \int_0^\infty t^{n+2l}h(t)dt=  (n+2l)\tilde B.
 \end{equation}
Making use of the Jensen's inequality, we derive
 \begin{equation*}
   \frac1{2^n} \leq n\tilde A.
 \end{equation*}
By Lemma \ref{P}, integrating (\ref{Polynomial}) in $t$ from $\delta$ to $\delta+1$, we obtain
\begin{equation}\label{M5}
 \begin{aligned}
   n\int_\delta^{\delta+1}t^{n+2l}dt\geq& (n+2l)s^{2l}\int_\delta^{\delta+1}t^ndt-2ls^{n+2l}+2l s^{n+2l-2}\int_\delta^{\delta+1}(t-s)^2 dt
 \\&+4ls^{n+2l-3}\int_\delta^{\delta+1}t(t-s)^2 dt.
 \end{aligned}
 \end{equation}
Since
\begin{equation*}
  \int_\delta^{\delta+1}(t-s)^2 dt\geq \frac1{12}
\end{equation*}
and
\begin{equation*}
  \int_\delta^{\delta+1}t(t-s)^2 dt\geq\frac 12s^2-\frac23s+\frac14,
\end{equation*}
setting  $s=(n\tilde A)^{\frac{1}{n}}$, we obtain
\begin{equation}\label{}
 n(n+2l)\tilde B\geq n(n\tilde A)^{1+\frac{2l}{n}}+2l(n\tilde A)^{1+\frac{2l-1}{n}}-\frac 52l(n\tilde A)^{1+\frac{2l-2}{n}}+l(n\tilde A)^{1+\frac{2l-3}{n}}
\end{equation}
by using (\ref{M3}) and (\ref{M4}).
Finally, by the definitions of  (\ref{M1}) and (\ref{M2}), we get (\ref{Mineq2}).
\end{proof}

\section{Proofs of  Theorem \ref{PLThm1} and Theorem \ref{HSThm1}}

\begin{proof}[Proof of Theorem \ref{PLThm1}]
From Lemma \ref{lem2} and Lemma \ref{MainLemma}, we have
\begin{equation}\label{Mineq4}
  \begin{aligned}
   \sum_{j=1}^k\ld_j\geq& \frac{n}{n+2l}\omega_n^{-\frac {2l}{n}}\psi(0)^{-\frac{2l}{n}}k^{1+\frac{2l}{n}}+
   \frac{2l\varepsilon}{(n+2l)L}\omega_n^{-\frac {2l-1}{n}}\psi(0)^{1-\frac{2l-1}{n}}k^{1+\frac{2l-1}{n}}\\
   &-\frac{5l}{2(n+2l)L^2}\omega_n^{-\frac {2l-2}{n}}\psi(0)^{2-\frac{2l-2}{n}}k^{1+\frac{2l-2}{n}}
   +\frac{\varepsilon l}{(n+2l)L^3}\omega_n^{-\frac {2l-3}{n}}\psi(0)^{3-\frac{2l-3}{n}}k^{1+\frac{2l-3}{n}}.
  \end{aligned}
\end{equation}

Now define two functions
\begin{equation}\label{}
  F_1(x)=c_1 x^{-\frac {2l}{n}}+c_2x^{1-\frac{2l-1}{n}}\qquad \text{and}\qquad F_2(x)=c_3 x^{3-\frac {2l-3}{n}}-c_4x^{2-\frac{2l-2}{n}},
\end{equation}
for $x\in \left(0,\frac{|\Omega|}{(2\pi)^n}\right]$,
where the constants
\begin{equation*}
  \begin{aligned}
  &c_1=\frac{n}{n+2l}\omega_n^{-\frac {2l}{n}}k^{1+\frac{2l}{n}},
  &c_2=\frac{2l\varepsilon}{(n+2l)L}\omega_n^{-\frac {2l-1}{n}}k^{1+\frac{2l-1}{n}},\\
  &c_3=\frac{\varepsilon l}{(n+2l)L^3}\omega_n^{-\frac {2l-3}{n}}k^{1+\frac{2l-3}{n}},
  &c_4=\frac{5l}{2(n+2l)L^2}\omega_n^{-\frac {2l-2}{n}}k^{1+\frac{2l-2}{n}}.
  \end{aligned}
\end{equation*}
For $1\leq l< \frac{n+1}{2}$, $F_1(x)$ is  decreasing if $0<x\leq \left(\frac{l c_1}{(n-2l+1) c_2} \right)^{\frac{n}{n+1}}$ and  $F_2(x)$ is  decreasing if $ x\leq \left(\frac{(2n+2-2l)c_4}{(3n+3-2l)c_3}\right)^{\frac{n}{n+1}}$. The function $F_1(x)+F_2(x)$ is decreasing for $x\in \left(0,\frac{|\Omega|}{(2\pi)^n}\right]$  when we have
\begin{equation}\label{}
 \frac{|\Omega|}{(2\pi)^n}\leq \min\left\{\left(\frac{l c_1}{(n-2l+1) c_2} \right)^{\frac{n}{n+1}},
 \quad  \left(\frac{(2n+2-2l)c_4}{(3n+3-2l)c_3}\right)^{\frac{n}{n+1}}\right\}
 \end{equation}
Since $k\geq 1$, by (\ref{B6}), it is sufficient to find the upper bound of $\varepsilon$. In fact, we have
\begin{equation}\label{}
  \varepsilon \leq \underset{1\leq l< \frac{n+1}{2},2\leq n}\min\left\{\frac{n}{n-2l+1}  \Gamma(1+\frac n2)^{\frac 2n}, \frac{10(n+1-l)}{3n+3-2l} \Gamma(1+\frac n2)^{\frac{2}{n}}\right\}.
\end{equation}
Obviously, it holds that $\frac{n}{n-1}\leq \frac{n}{n-2l+1}$ and $\frac14<\frac{n+1-l}{3n+3-2l}$ for $1\leq l< \frac{n+1}{2}$
and then $\frac{n}{n-1}<\frac 52$ for $n\geq 2$.
From the definition of $\varepsilon$ in Lemma \ref{MainLemma},  we can choose
 $$
 \varepsilon=\min\left\{1, \frac{n}{n-1}  \Gamma(1+\frac n2)^{\frac 2n}\right\}=1
 $$
to guarantee the function $F_1(x)+F_2(x)$  decreasing for $x\in \left(0,\frac{|\Omega|}{(2\pi)^n}\right]$. Therefore, taking $\varepsilon=1$ and  $\psi(0)=\frac{|\Omega|}{(2\pi)^n}$ in  (\ref{Mineq4}), we can infer (\ref{Mainineq2}).
\end{proof}

\begin{proof}[Proof of Theorem \ref{HSThm1}]
From Lemma \ref{MainLemma}, we have the inequality
\begin{equation*}
  \begin{aligned}
   \sum_{j=1}^k\mu_j\geq& \frac{n}{n+2l}\omega_n^{-\frac {2l}{n}}\psi(0)^{-\frac{2l}{n}}k^{1+\frac{2l}{n}}+
   \frac{2l\varepsilon}{(n+2l)L_\mathrm{S}}\omega_n^{-\frac {2l-1}{n}}\psi(0)^{1-\frac{2l-1}{n}}k^{1+\frac{2l-1}{n}}\\
   &-\frac{5l}{2(n+2l)L_\mathrm{S}^2}\omega_n^{-\frac {2l-2}{n}}\psi(0)^{2-\frac{2l-2}{n}}k^{1+\frac{2l-2}{n}}
   +\frac{\varepsilon l}{(n+2l)L_\mathrm{S}^3}\omega_n^{-\frac {2l-3}{n}}\psi(0)^{3-\frac{2l-3}{n}}k^{1+\frac{2l-3}{n}}.
  \end{aligned}
\end{equation*}
Similar to the proof of Theorem \ref{PLThm1}, we can give the proof of Theorem \ref{HSThm1} by only replacing $\ld_j, L, M$ by $\mu_j, L_\mathrm{S}, M_\mathrm{S}$ in the proof of Theorem \ref{PLThm1} respectively. So we omit its proof.
\end{proof}

\section{Proofs of Theorem \ref{Vou0} and Theorem  \ref{Vou2}  }
In order to prove Theorem \ref{Vou0}, we need the following lemma derived by Ilyin \cite{I10,I13}.

\begin{lem}\label{ILY1}{(Lemma 3.1 in \cite{I10})}\qquad
Let
\begin{equation}\label{Ily1}
\Psi_s(r)  =
\left \{\aligned
&M,            &  \mbox{for}  \quad  0  \leq r \leq s,  \\
&M -  L(r-s),  &  \mbox{for}   \quad  s\leq r \leq s + \frac{M}{L},\\
&0,            &  \mbox{for}  \quad   r \geq s + \frac{M}{L},
\endaligned\right.
\end{equation}
where $M, L$ is given by \eqref{ML}. Suppose that $\int_0^{+\infty}  r^b \Psi_s(r)  dr = m^*$ and $d \geq b$.
Then for any decreasing and absolutely continuous function $F$ satisfying the conditions
\begin{equation}
 0 \leq F \leq M,   \quad  \int_0^{+\infty}   r^b   F(r) dr  =  m^*,   \quad  0 \leq   -F'  \leq L,
 \end{equation}
the following inequality holds:
\begin{equation}
\int_0^{+\infty}   r^d   F(r) dr  \geq   \int_0^{+\infty}   r^d  \Psi_s(r) dr.
\end{equation}
\end{lem}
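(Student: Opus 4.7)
The plan is to reduce the inequality to the claim that $H(r) := F(r) - \Psi_s(r)$ has at most one sign change on $[0, \infty)$, passing from nonpositive to nonnegative at some crossing point $r_0 \in [s, s+M/L]$. The moment constraint gives $\int_0^\infty r^b H(r)\,dr = 0$; once this sign structure is known, a standard weighted-moment trick converts this identity into the desired inequality for the higher moment $\int_0^\infty r^d H(r)\,dr$.

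To establish the sign-change structure I would first note that $H \le 0$ on $[0, s]$ because $F \le M = \Psi_s$ there, and $H \ge 0$ on $[s + M/L, \infty)$ because $\Psi_s \equiv 0$ while $F \ge 0$. The essential work is on the transition interval $(s, s + M/L)$, where $\Psi_s$ descends at the maximal admissible slope $-L$, so that $F'(t) \ge -L = \Psi_s'(t)$ almost everywhere. Consequently, if $H(r^*) \ge 0$ at some $r^* \in (s, s+M/L)$, then absolute continuity gives
$$H(r) - H(r^*) = \int_{r^*}^{r} \bigl(F'(t) - \Psi_s'(t)\bigr)\,dt \ge 0 \quad \text{for all } r \in [r^*, s+M/L],$$
so $H$ stays nonnegative after any zero it attains inside the transition interval. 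Together with the boundary information, this forces the claimed one-sided crossing, with $r_0 := \inf\{r \ge 0 : H(r) \ge 0\}$ lying in $[s, s+M/L]$.

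With the sign structure in hand, for $d \ge b$ I would use $\int_0^\infty r^b H(r)\,dr = 0$ to rewrite
$$\int_0^\infty r^d \bigl(F(r) - \Psi_s(r)\bigr)\,dr = \int_0^\infty r^b \bigl(r^{d-b} - r_0^{d-b}\bigr) H(r)\,dr.$$
The factor $r^{d-b} - r_0^{d-b}$ is nonpositive on $[0, r_0]$ and nonnegative on $[r_0, \infty)$, which matches the sign pattern of $H$, so the integrand is pointwise nonnegative and the inequality of the lemma follows. I expect the main technical obstacle to be the one-sign-change argument on the transition interval, specifically the need to handle the almost-everywhere Lipschitz bound on $F$ cleanly via absolute continuity and to verify that degenerate edge cases (e.g.\ $H$ vanishing on an entire subinterval, or $F(0) < M$ strictly) do not disturb the crossing structure; these can be absorbed by taking $r_0$ as an infimum and by checking that equality in any step gives the limiting equality for $\Psi_s$ itself.
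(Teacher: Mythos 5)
Your argument is correct: the three observations ($H=F-\Psi_s\le 0$ on $[0,s]$, $H\ge 0$ beyond $s+M/L$, and $H'=F'+L\ge 0$ a.e.\ on the transition interval) do force the negative set of $H$ to lie entirely to the left of its positive set, after which the weighted-moment identity with the factor $r^{d-b}-r_0^{d-b}$ closes the proof; the only slip is that $r_0$ should be taken as $\sup\{r: H(r)<0\}$ rather than $\inf\{r: H(r)\ge 0\}$ (the latter can land inside a plateau where $F=M$ before $s$), but any point separating the two sign sets works. Note that the paper itself offers no proof of this lemma --- it is quoted verbatim from Ilyin \cite{I10} --- and your single-crossing argument is essentially the standard one used there.
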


Now we give the proof Theorem \ref{Vou0}.
\begin{proof}[Proof of Theorem \ref{Vou0}]
Assume that $F(\xi)$ is defined as in (\ref{F}). Define
\begin{equation}\label{V3}
    R=\left(\frac{n+2lq}{n}\frac{(2\pi)^{n}\sum_{j=1}^k\ld_j^q}{|\Omega|\omega_n}\right)^{\frac 1{n+2lq}}
\end{equation}
and
\begin{equation}
    \textit{Y}(\xi)=\frac{|\Omega|}{(2\pi)^n}\chi_{B_R(0)}(\xi),
\end{equation}
such that
\begin{equation}\label{}
  \int_{\Bbb R^n}|\xi|^{2lq}\textit{Y}(\xi)d\xi=\sum_{j=1}^k\ld_j^q.
\end{equation}
Since $\int_{\Bbb R^n} |\hat u_j(\xi)|^2d\xi=1$ and $x\longmapsto x^q$ is concave for $x\geq 0$ and $q\in (0, 1]$,
by using the Jesen's inequality, we can derive
\begin{equation}\label{IV1}
\sum_{j=1}^k\ld_j^q=\sum_{j=1}^k\left(\int_{\Bbb R^n}|\xi|^{2l} |\hat u_j(\xi)|^2d\xi\right)^q\geq
\int_{\Bbb R^n}|\xi|^{2lq} F(\xi)d\xi.
\end{equation}
Let $F^*(\xi) $ denote the decreasing radial rearrangement of $F(\xi)$.
According to the Hardy-Littlewood inequality, we have
\begin{equation}\label{IV2}
 \int_{\Bbb R^n}|\xi|^{2lq} F(\xi)d\xi\geq   \int_{\Bbb R^n}|\xi|^{2lq} F^*(\xi)d\xi.
\end{equation}
By using \eqref{IV1} and \eqref{IV2}, we can get
\begin{equation}\label{IV3}
\sum_{j=1}^k\ld_j^q\geq \int_{\Bbb R^n}|\xi|^{2lq} F^*(\xi)d\xi=n\omega_n\int_0^\infty r^{n-1+2lq}F^*(r)dr,
\end{equation}
where $F^*(r)=F^*(|\xi|)$.   It is easy to find $F(\xi)$ and $F^*(|\xi|)$ satisfy the conditions of Lemma \ref{ILY1}
according to Lemma \ref{lem1}.
Therefore, we have
\begin{equation}\label{IV4}
\int_0^{+\infty}   r^{n-1+2lq}  F^*(r) dr  \geq   \int_0^{+\infty}  r^{n-1+2lq} \Psi_s(r) dr.
\end{equation}
Using Corollary 1 in \cite{I13}, we find that
\begin{equation}\label{IV5}
 \begin{aligned}
n\omega_n\int_0^{+\infty}  r^{n-1+2lq} \Psi_s(r) dr \geq & \frac{n}{n+2lq}\left(\frac{1}{\omega_nM}\right)^{\frac{2lq}{n}}n\omega_n
   k^{1+\frac{2lq} n}\\
   &+\frac{2lq n}{24}\frac{M^2}{L^2}\left(\frac{1}{\omega_nM}\right)^{\frac {2lq-2}{n}}n\omega_n k^{1+\frac{2lq-2} n}+O(k^{1+\frac{2lq-4}{n}})\\
   =&(4\pi)^{lq}\frac{n}{n+2lq}\left(\frac{\Gamma(1+n/2)}{|\Omega|}\right)^{\frac{2lq}{n}}
   k^{1+\frac{2lq}{n}}\\
   &+(4\pi)^{lq-1}\frac{nlq}{48}\frac{|\Omega|}{I(\Omega)}
   \left(\frac{\Gamma(1+n/2)}{|\Omega|}\right)^{\frac{2lq-2}{n}}k^{1+\frac{2lq-2}{n}}+O(k^{1+\frac{2lq-4}{n}}),
  \end{aligned}
\end{equation}
where  we use the definitions of $M$ and $L$ given by \eqref{ML}.
Finally, from \eqref{IV3}$\sim$\eqref{IV5}, we obtain \eqref{V0}.
\end{proof}

Now we give the proof of Theorem \ref{Vou2}.
\begin{proof}[Proof of Theorem \ref{Vou2}]
Define
\begin{equation}\label{V4}
  \tilde{R}=\left(\frac{(n-2lp)}{n}\frac{(2\pi)^n\sum_{j=1}^k\ld_{j=1}^{-p}}
              {|\Omega|\omega_n}\right)^{\frac{1}{n-2lp}}
\end{equation}
and
\begin{equation}
 \tilde{Y}(\xi)=\frac{|\Omega|}{(2\pi)^n}\chi_{B_{\tilde R(0)}}(\xi)
\end{equation}
such that
\begin{equation}\label{}
  \int_{\Bbb R^n}\frac{\tilde{Y}(\xi)}{|\xi|^{2lp}}d\xi=\sum_{j=1}^k\ld_j^{-p}.
\end{equation}
Since $|\hat u_j(\xi)|^2 d\xi$ defines a probability measure on $\Bbb R^n$ , $x\longmapsto x^{-p}$ is convex for $x>0$ and $p>0$,
we derive
\begin{equation}\label{}
  \int_{\Bbb R^n} \frac{F(\xi)}{|\xi|^{2lp}}\geq \sum_{j=1}^k\ld_j^{-p}
\end{equation}
according to the Jensen's inequality.
Similar to the arguments in Lemma 1 in \cite{LY}, one can infer
\begin{equation*}
  \int_{\Bbb R^n} F(\xi)d\xi\geq \int_{\Bbb R^n} \tilde Y(\xi)d\xi.
\end{equation*}
By using (\ref{B2}) and  (\ref{V4}), we obtain (\ref{V2}).
\end{proof}

\begin{rem}
The proof of Theorem \ref{Vou1} can also be given directly by using a similar approach as the proof of Theorem \ref{Vou2}.
\end{rem}

\bigskip
\noindent {\bf Acknowledgements}
 The work of the first named author was partially supported by NSFC grant No.11101234. The work of the second named author  was partially
supported by NSFC grant No. 11001130.


\providecommand{\bysame}{\leavevmode\hbox
to3em{\hrulefill}\thinspace}

\begin{flushleft}
Daguang Chen\\
Department of Mathematical Sciences, Tsinghua University, Beijing, 100084, P. R. China \\
E-mail: dgchen@math.tsinghua.edu.cn
\end{flushleft}

\begin{flushleft}
He-Jun Sun\\
Department of Applied Mathematics, College of Science, Nanjing University of Science and Technology,
Nanjing 210094, P. R. China \\
E-mail: hejunsun@163.com
\end{flushleft}

\end{document}